\def\BibTeX{{\rm B\kern-.05em{\sc i\kern-.025em b}\kern-.08em
    T\kern-.1667em\lower.7ex\hbox{E}\kern-.125emX}}
\theoremstyle{definition}
\newtheorem{theorem}{Theorem}
\newtheorem{prop}{Proposition}
\newtheorem{corollary}{Corollary}
\newtheorem{remark}{Remark}
\begin{document}

\bstctlcite{IEEEexample:BSTcontrol}
\title{Chance-Constrained Energy Storage
Pricing \\ for Social Welfare Maximization}

\author{\IEEEauthorblockN{Ning Qi, \textit{Member, IEEE}}, \IEEEauthorblockN{Ningkun Zheng, \textit{Student Member, IEEE}}, \IEEEauthorblockN{Bolun Xu, \textit{Member, IEEE}}

\thanks{This work was partly supported by the Department of Energy, Office of Electricity, Advanced Grid Modeling Program under contract DE-AC02-05CH11231 and partly supported by the National Science Foundation under award ECCS-2239046.
Paper no.TEMPR-XXXXX-2024.

Ning Qi, Ningkun Zheng and Bolun Xu are with the Department of Earth and Environmental Engineering, Columbia University, New York, NY
10027 USA (e-mail: \{nq2176, nz2343, bx2177\}@columbia.edu). }
\vspace{-0.3cm}
}

\markboth{IEEE TRANSACTIONS ON Energy Markets, Policy and Regulation,~Vol.~X, No.~X, XX July~2024}
{How to Use the IEEEtran \LaTeX \ Templates}

\maketitle


\begin{abstract}

This paper proposes a novel framework to price energy storage in economic dispatch with a social welfare maximization objective. This framework can be utilized by power system operators to generate default bids for storage or to benchmark market power in bids submitted by storage participants.
We derive a theoretical framework based on a two-stage chance-constrained formulation which systematically incorporates system balance constraints and uncertainty considerations. We present tractable reformulations for the joint chance constraints. Analytical results show that the storage opportunity cost is convex and increases with greater net load uncertainty. We also show that the storage opportunity prices are bounded and are linearly coupled with future energy and reserve prices. We demonstrate the effectiveness of the proposed approach on an ISO-NE test system and compare it with a price-taker storage profit-maximizing bidding model. Simulation results show that the proposed market design reduces electricity payments by an average of 17.4\% and system costs by 3.9\% while reducing storage's profit margins, and these reductions scale up with the renewable and storage capacity.

\end{abstract}
\begin{IEEEkeywords}
Energy storage, opportunity price, chance-constrained optimization, social welfare maximization, market design
\end{IEEEkeywords}

\section{Introduction}\label{Introduction}

\IEEEPARstart{E}{fficient} management of energy storage resources is critical to reliable and economical operations as their market share continues to surge. The capacity of battery energy storage in the California Independent
System Operator (CAISO) has exceeded 8.6 GW in April 2024 and is projected to reach 50 GW in 2045~\cite{CAISORE}, with most of the storage conducting price arbitrage in wholesale markets~\cite{zheng2023energy}. 
Thus, power system operators must incorporate storage's physical and economic characteristics in market designs to facilitate the convergence of storage participation with social welfare~\cite{williams2022electricity}.

Current market practices primarily rely on storage participants generating strategic bids, with a limited understanding of dispatching storage to minimize system operational costs and maximize social welfare. While traditional electricity markets are based on the marginal fuel cost curve of thermal generators~\cite{kirschen2018fundamentals}, most wholesale markets now allow storage to participate as both generators and responsive demand. In real-time markets, which clear bids up to one hour in advance, storage must account for future opportunity costs in their offers~\cite{baker2023transferable} based on private predictions of future market prices~\cite{ebrahimian2018price,yang2021real}. Consequently, storage operators design these bids to maximize market profits~\cite{krishnamurthy2017energy,qin2024economic}, potentially misaligning social welfare objectives. On the other hand, market power mitigation for storage is challenging due to the inability to distinguish between capacity withholding for capturing legitimate future opportunities and intentions to exercise market power~\cite{xu2024truthful}.

Power system operators must be equipped to generate default storage bids, providing an anchoring point to monitor profit-seeking storage participants.
This paper proposes a novel approach to price energy storage future opportunities for social welfare maximization. The proposed pricing model is designed to run prior to the main economic dispatch, determining the dispatch price for storage.
While previous studies have examined storage opportunity costs from the perspective of profit-seeking participants~\cite{qin2024economic, chen2021pricing}, our paper focuses on social welfare. 
Our contributions are: 
\begin{enumerate}
    \item \textit{Chance-Constrained Opportunity Pricing Framework:} 
        We propose a novel opportunity pricing design for energy storage with a social welfare maximization objective while guaranteeing robust competitive equilibrium. Opportunity price is derived from a chance-constrained two-stage joint economic dispatch for energy and reserve, incorporating uncertainties from renewables and load as well as the risk preference of market operators.  
    \item \textit{Theoretical Pricing Analysis:} 
We provide a theoretical analysis of key characteristics of the proposed pricing framework to establish a robust understanding of market intuitions. We show that the storage opportunity price decreases monotonically with the state of charge (SoC) while increasing monotonically with net load uncertainty. We prove that the opportunity prices have a linear coupling relationship with the energy and reserve prices when storage participates in both energy and reserve markets, and that the opportunity prices have bounded values.
    \item \textit{Simulation Analysis:} 
        We validate our theoretical findings using an 8-zone ISO-NE test system and benchmark it with a price-taker profit-maximizing storage bidding approach. Simulations show that the proposed pricing mechanism significantly reduces system costs and electricity payments compared to existing practices. Furthermore, these reductions scale up with the capacity of storage and intermittent renewables.
\end{enumerate}


We organize the remainder of the paper as follows. 
Section~\ref{LR} summarizes the previous works on market design of energy storage and pricing of uncertainty. Section~\ref{PR}~provides problem formulation and preliminaries of chance-constrained opportunity pricing. Section~\ref{MR}~presents the theoretical pricing analysis. Section~\ref{Case Study}~describes case studies to verify the theoretical results. 
Finally, section~\ref{Conclusion}~concludes this paper.
\vspace{-0.1cm}
\section{Literature Review}\label{LR}

\subsection{Market Design of Energy Storage}

Efforts are underway to adapt electricity market designs to better accommodate special cost components of storage and maximize storage flexibility utilization~\cite{frolke2024efficiency}. Fang et al.~\cite{fang2022efficient} incorporate the cost functions related to cycling mileage and valuation functions of ending SoC into storage bidding. Xu et al.~\cite{xu2020operational} propose an analytical expression for the opportunity cost of storage under multi-stage price uncertainties and prove that it depends on SoC. Building on this, Zheng et al.~\cite{zheng2023energy} proposes a SoC-dependent bidding mechasim while incorporating the opportunity cost of storage. Chen et al.~\cite{chen2023convexifying} incorporates convexified SoC-dependent bids of storage into multi-period economic dispatch and market clearing problem. The above works~\cite{fang2022efficient,zheng2023energy,chen2023convexifying} ensure that storage can bid truthfully if the lost opportunity cost of storage is eliminated~\cite{chen2021pricing}. However, all the mentioned works aim for profit maximization of energy storage with a price-taker assumption, which only benefits large and smart storage participants who are experienced in price prediction and exercising market power. While, the current market design disadvantages system operators and small storage participants who are unwilling or unable to invest extra intelligence to exercise market power. As we move towards a future decarbonized power system with significant amounts of energy storage and renewables, it is crucial for the system operator to develop more systematic methods for storage opportunity price clearing, which aim to maximize social welfare while considering uncertainties.


\subsection{Pricing Uncertainty in Stochastic Electricity Market}

System operator have been rapidly innovating on stochastic electricity market design, while incorporating uncertainties from renewables and load into market pricing and clearing. 
Early works typically model uncertainty using scenario-based stochastic optimization. Seminal work~\cite{wong2007pricing} firstly employed scenario-based stochastic optimization for energy and reserve market clearing. Papavasiliou at al.~\cite{papavasiliou2011reserve} generate a less conservative reserve margin design with lower expected operation cost by stochastic optimization. Qin et al.~\cite{qin2024economic} and Xu et al.~\cite{xu2020operational} incorporate the stochasticity of real-time energy price into energy storage bidding through stochastic dynamic programming. However, market design with stochastic optimization does have certain limitations. First, the revenue adequacy and cost recovery are proved to be satisfied only in expectation, but do not necessarily hold for individual scenarios (e.g., worst-case scenario)~\cite{morales2012pricing}. Although this issue is solved by probability-weighted price~\cite{kazempour2018stochastic}, social welfare maximum and market equilibrium can not guaranteed. Secondly, prices derived from dual variables are generated for each scenario, which complicates market settlements as there is no default pricing scheme~\cite{kuang2018pricing}. Hence, as stated in~\cite{kazempour2018stochastic}, the scenario-based stochastic optimization is intended for market analysis rather than market-clearing tools.

Noticed by these limitations, the chance-constrained approach is proposed which offers superior probabilistic properties compared to stochastic optimization and less conservatism compared to robust optimization. 
Furthermore, chance-constrained optimization can be
exactly reformulated into deterministic expressions and
solved efficiently at scale~\cite{roald2017chance}. And by adjusting the confidence level, the system operator can directly control the constraint violation and guarantee the system reliability. Dvorkin~\cite{dvorkin2019chance} proposes a chance-constrained market design that formulates energy and reserve prices within the unit commitment problem. Dorini et al.~\cite{dorini2013chance} generates the optimal demand response price signals by embedding the finite impulse response model within a chance-constrained framework. Liu et al.~\cite{liu2016distribution} designs a distribution locational marginal pricing for electric vehicle through a chance-constrained approach to alleviate congestion in the distribution network. To the best of our knowledge, no research work has yet designed the opportunity pricing of energy storage for social welfare maximization under the chance-constrained framework.


\section{Problem Formulation and Preliminaries}\label{PR}
We consider a vertically integrated economic dispatch setting where the system operator aims to minimize system operational costs and use the proposed framework to generate opportunity prices for storage. The proposed pricing model operates before the main dispatch/market clearing, serving its results as inputs for the primary market clearing process. In deregulated systems, these pricing results can function as default bids for storage participants or as benchmarks for market power.

     
\subsection{Two-Stage Chance-Constrained Economic Dispatch}
Since we consider a pricing model for social welfare maximization, we simplify our setting to consider a single bus system with an aggregated generator and an aggregated storage system to focus on analyzing the relationship among dispatch decisions, system uncertainties, and prices.
Note that the proposed framework can extrapolate to multiple generators/storages with network models. The formulate is presented as follows:

\noindent\textbf{Functions and Parameters:}
\begin{enumerate}

\item $\mathbb{E}$\text{, }$\mathbb{P}$: Expectation and probability function.
\item ${G}$: Cost function of a conventional generator, which is practically piecewise linear/quadratic and approximately quadratic/super-quadratic.
\item ${M}$: Marginal cost of energy storage.
\item ${{D}_{t}}$\text{, }$\bm{{d}_{t}}$: Forecast value and forecast error of net load at time ${t}$, where net load refers to the difference between load demand and power output of renewables, $\bm{{d}_{t}}$ is a random variable.
\item $\overline{P}$\text{, }$\overline{E}$: Maximum power and energy capacity of energy storage.
\item $\eta$: Charge/discharge efficiency of energy storage.
\item  $\overline{G}$\text{, }$\underline{G}$: Maximum and minimum power output of conventional generator.
\item $\epsilon$: Probability level of chance-constraint, e.g., 5\%.
\item ${\lambda }_{t}$\text{, }${\theta}_{t}$\text{, }${\pi}_{t}$: Energy price, opportunity price and reserve cost at time ${t}$.
\item $\underline{\alpha}_{t}$\text{, }$\overline{\alpha}_{t}$\text{, }$\underline{\beta}_{t}$\text{, }$\overline{\beta}_{t}$\text{, }$\underline{\nu}_{t}$\text{, }$\overline{\nu}_{t}$\text{, }$\underline{\iota}_{t}$\text{, }$\overline{\iota}_{t}$\text{, }$\underline{\kappa}_{t}$\text{, }$\overline{\kappa}_{t}$:  Dual variables of the corresponding constraints at time ${t}$.
\end{enumerate}

\noindent\textbf{Decision Variables:}
\begin{enumerate}
\item ${g}_{t}$: First-stage power output of conventional generator at time ${t}$.
\item $p_{t}$\text{, }$b_{t}$\text{, }${e}_{t}$: First-stage discharge power, charge power and SoC of energy storage at time ${t}$.
\item $\varphi_{t}\text{, }\psi_{t}$: First-stage reserve allocation ratio of conventional generators and energy storage at time ${t}$.
\end{enumerate}
\begin{subequations}
\begin{align}
&  \min\ \mathbb{E}\ \sum\nolimits_{t}\big({{{G}}\left( {{g}_{t}}+\varphi_{t}\bm{{d}_{t}} \right)}+M(p_{t}+\psi_{t}\bm{{d}_{t}})\big)\label{obj}\\
& {g}_{t}+p_{t}-b_{t}={D}_{t}:\text{  }{{\lambda }_{t}} \label{pb}\\
&{e}_{t+1}-{e}_{t}=-p_{t}/{\eta}+b_{t}{{\eta }}:\text{  }{{\theta }_{t}} \label{SoC}\\
&\varphi_{t}+\psi_{t}=1:\text{ }{{\pi}_{t}} \label{rgt}\\
&0\leq\varphi_{t}\text{,}\psi_{t}\leq 1\label{rgre}\\
&\mathbb P\big(\underline{G}\le g_{t}+\varphi_{t}\bm{{d}_{t}}\le \overline{G}\big)\geq 1-\epsilon:\text{ }\underline{\nu }_{t}\text{, }\overline{\nu }_{t} \label{rgulbound}\\
&b_{t}p_{t}=0 \label{st}\\
& 0\le b_{t}\text{, } \mathbb P\big(b_{t}-\psi_{t}\bm{{d}_{t}}\le \overline{P})\geq 1-\epsilon:\text{ }\underline{\alpha}_{t}\text{, }\overline{\alpha }_{t} \label{pcbound}\\
&  0\le p_{t}\text{, } \mathbb P\big(p_{t}+\psi_{t}\bm{{d}_{t}}\le \overline{P})\geq 1-\epsilon:\text{ }\underline{\beta }_{t}\text{, }\overline{\beta }_{t} \label{pdbound}\\
& \mathbb P\big((\psi_{t}\bm{{d}_{t}}+p_{t})/{\eta}\le e_{t}\le \overline{E}-(b_{t}-\psi_{t}\bm{{d}_{t}})\eta\big)\geq 1-\epsilon:\text{ }\underline{\iota}_{t}\text{, }\overline{\iota}_{t}\label{rsulbound}
\end{align}\label{MCCED}
\end{subequations}
\vspace{-0.1cm}

The objective function~\eqref{obj} aims to minimize both the first-stage and second-stage system costs, including the production cost from conventional generators and the degradation cost from energy storage. Constraint~\eqref{pb} guarantees the power balance. Constraint~\eqref{SoC}~defines the relationship between the state of charge (SoC) and charge/discharge actions of storage. Constraints~\eqref{rgt}--\eqref{rgre}~set the reserve allocation ratio to the conventional generator and energy storage. The joint chance-constraints \eqref{rgulbound} limit conventional generators' power and reserve outputs within the power bounds with the confidence level $1-\epsilon$. Constraint~\eqref{st} ensures that storage cannot charge and discharge simultaneously within the same time interval. Charge and discharge power with both energy and reserve provision is limited by \eqref{pcbound} and \eqref{pdbound}, respectively. The joint chance-constraints \eqref{rsulbound} limit the storage SoC within the bounds with the confidence level $1-\epsilon$. For simplicity, generation and storage ramping constraints are omitted here.

\subsection{Reformulation}

\textbf{(1) Deterministic Reformulation of Expectation Cost Function:} Assume that the generator cost function admits a polynomial form, i.e., $G(x) = \sum_{i=0}^{n} C_{i} x^{i}$, then the objective function~\eqref{obj} can be reformulated as~\eqref{Ed}-\eqref{Ep}. If the cost function is quadratic (n=2), we can directly derive the explicit expression of $\mathbb{E}(\bm{d}_{t})^{k}$ for $k=1,2$. But if the cost function is super-quadratic (\( n > 2 \)), we should assume a specific distribution (e.g., Gaussian distribution) for net load forecast error, and we can derive the explicit expression of $\mathbb{E}(\bm{d}_{t})^{k}$ as~\eqref{Ednormal}. Note that, due to the symmetry of the Gaussian distribution, all odd terms in~\eqref{Ednormal} are zero.
\begin{subequations}
    \begin{align}
&\mathbb{E}\big({{G}}\left( {{g}_{t}}+\varphi_{t}\bm{{d}_{t}} \right)\big)=\sum_{i=0}^n C_i \sum_{k=0}^i\binom{i}{k} g_t^{i-k} \varphi_t^k \mathbb{E}(\bm d_{t})^{k}\label{Ed}\\
        &\mathbb{E}\big( M(p_{t}+\psi_{t}\bm{{d}_{t}})\big)=M(p_{t}+\psi_{t}{\mu}_{t})\label{Ep}\\
       & \mathbb{E}(\bm d_{t})^{k}=\sum_{j=1}^{k}\binom{k}{j} \mu_{t}^{k-j} \sigma_{t}^j(j-1)!!\label{Ednormal}\end{align}\label{expectation}
\end{subequations}

\textbf{(2) Approximation of Joint Chance-Constraints:} By leveraging Bonferroni Approximation~\cite{nemirovski2007convex}, joint chance-constraints~\eqref{rgulbound}~and~\eqref{rsulbound}~can be decomposed into~\eqref{ba}-\eqref{baa}. Hereby, ${{a}_{i}}{(\bm{x})}\text{, }{{b}_{i}}(\textbf{x})$ are affine functions of decisions $\bm{x}$. $\bm{\xi}$ defines uncertainty. \textit{N} is the number of chance-constraints considered jointly.  However, the Bonferroni Approximation does not specify the value of $\epsilon_{i}$. One feasible solution is to use the equal risk allocation, i.e., $\epsilon_{i}=\epsilon/N$. 
Another possible way is to set $\epsilon_{i}$ based on market practices~\cite{wu2014chance}. 
\begin{subequations}
\begin{align}
   & \mathbb{P}\left( {{a}_{i}}{{(\bm{x})}^{\text{T}}}\bm{\xi}(\bm{x}) \le {{b}_{i}}(\bm{x}) \right) \ge 1-\epsilon_{i}\text{,}\quad i=1\text{,}2\text{,}\cdots\text{,}N \label{ba}\\
   & \sum\nolimits_{i=1}^{N}{{{\epsilon }_{i}}}\le \epsilon\text{,}\quad {\epsilon }_{i}\ge 0\text{,}\quad i=1\text{,}2\text{,}\cdots\text{,}N \label{baa}
\end{align}
\end{subequations}

\textbf{(3) Deterministic Reformulation of Chance-Constraints:} Given that the net load forecast error $\bm{{d}_{t}}$ follows a moment-based distribution, i.e., $\bm{{d}_{t}}\sim f\left(\mu_{t}, \sigma_{t}^2\right)$, where~$\mu_{t}$~\text{,}~$\sigma_{t}\text{,}$ and $f$ are the mean, standard deviation and probability density function, respectively. The chance-constraints~\eqref{rgulbound},~\eqref{pcbound}-\eqref{rsulbound} admit a deterministic reformulation in~\eqref{dr}. $F^{-1}$ is the inversed cumulative distribution function (CDF) of the distribution. Quantiles $\widehat{d}{t}$ and $\widetilde{d}{t}$ are given by: $\widehat{d}{t}=\mu_{t}- F^{-1} (1-\epsilon_{i})\sigma_{t}$ and $\widetilde{d}{t}=\mu_{t}+ F^{-1} (1-\epsilon_{i})\sigma_{t}$.
\begin{subequations}
    \begin{align}
&  \underline{G}\le g_{t}+\varphi_{t}\widehat{d}_{t}\text{, }g_{t}+\varphi_{t}\widetilde{d}_{t}\le \overline{G}\label{dr1}\\
& b_{t}-\psi_{t}\widehat{d}_{t}\le \overline{P}\text{, }p_{t}+\psi_{t}\widetilde{d}_{t}\le \overline{P}\label{dr2}\\
& (\psi_{t}\widetilde{d}_{t}+p_{t})/{\eta}\le e_{t}\text{, }e_{t}\le \overline{E}-(b_{t}-\psi_{t}\widehat{d}_{t})\eta\label{dr3}
\end{align}\label{dr}
\vspace{-0.5cm}
\end{subequations}  
\begin{remark}
The inverse CDF function can be obtained based on the following uncertainty realizations:
\begin{enumerate}
\item For uncertainty with assumed distribution, e.g., Gaussian distribution, then we have~$F^{-1}(1-\epsilon)=\Phi^{-1}(1-\epsilon)$.

\item For uncertainty with ambitious and incomplete information, then we can obtain the robust approximation of inversed CDF function from generalizations of the Cantelli’s inequality~\cite{qi2023}. Please refer to Table~\ref{approximation}.

\item For uncertainty with discrete historical observations, we can model the uncertainty by three-parameters \textit{Versatile Distribution} proposed in~\cite{versatile} and learn the parameters by Maximum Likelihood Estimation, then we have $F^{-1}(1-\epsilon\mid a\text{,}b\text{,}c)=c-\ln \left((1-\epsilon)^{-1 / b}-1\right)/{a}$.
\end{enumerate}
\end{remark}

\begin{table}[!ht]
  \centering
  \caption{Robust Approximation of Normalized Inverse Cumulative Distribution with Incomplete Information}
  \setlength{\tabcolsep}{0.2mm}{
      \begin{tabular}{c c c}
    \toprule
    Type \& Shape & ${{F}^{-1}}(1-\epsilon)$ & $\epsilon$ \\
    \midrule
    No Distribution Assumption (NA) & $\sqrt{(1-\epsilon )/\epsilon }$ & $0<\epsilon \le 1$ \\\specialrule{0em}{0.5em}{0em}
    \multirow{2}[0]{*}{ Symmetric Distribution (S)} & $\sqrt{1/2\epsilon }$ & $0<\epsilon \le 1/2$ \\\specialrule{0em}{0.5em}{0em}
          & 0     & $1/2<\epsilon \le 1$ \\\specialrule{0em}{0.5em}{0em}
    \multirow{2}[0]{*}{ Unimodal Distribution (U)} & $\sqrt{(4-9\epsilon )/9\epsilon }$ & $0<\epsilon \le 1/6$ \\\specialrule{0em}{0.5em}{0em}
          & $\sqrt{(3-3\epsilon )/(1+3\epsilon )}$ & $1/6<\epsilon \le 1$ \\\specialrule{0em}{0.5em}{0em}
    \multirow{3}[0]{*}{{ Symmetric \& Unimodal Distribution (SU)}} & $\sqrt{2/9\epsilon }$ & $0<\epsilon \le 1/6$ \\\specialrule{0em}{0.5em}{0em}
          & $\sqrt{3}(1-2\epsilon )$ & $1/6<\epsilon \le 1/2$ \\\specialrule{0em}{0.5em}{0em}
          & 0     & $1/2<\epsilon \le 1$ \\\specialrule{0em}{0.5em}{0em}
    \bottomrule
    \end{tabular}%
    }\vspace{-0.5cm}
  \label{approximation}%
\end{table}%

\subsection{Robust Competitive Equilibrium}

\begin{prop}\label{p1}
   Given the relaxation of~\eqref{st}~\cite{nazir2021guaranteeing} or the charge and discharge states, let~$\{g_t^{*}\text{, }p_{t}^{*}\text{, }b_{t}^{*}\text{, }e_{t}^*\text{, }\varphi_{t}^{*}\text{, }\psi_{t}^{*}\}$ be an optimal solution of \eqref{MCCED}~and~$\{\lambda_t^{*}\text{, }\theta_t^{*}\text{, }\pi_t^{*}\}$~be dual variables of constraints~\eqref{pb}-\eqref{rgt}. Then, $\{g_t^{*}\text{, }p_{t}^{*}\text{, }b_{t}^{*}\text{, }e_{t}^*\text{, }\varphi_{t}^{*}\text{,  }\psi_{t}^{*}\text{, }\lambda_t^{*}\text{, }\theta_t^{*}\text{, }\pi_t^{*}\}$
   constitutes a robust
competitive equilibrium, i.e.,  
\begin{enumerate}
    \item The system operator clear the market at: ${g}_{t}-p_{t}+b_{t}=D_{t}$, ${e}_{t+1}-{e}_{t}=-p_{t}/{\eta }+b_{t}{{\eta }}$, and $\varphi_{t}+\psi_{t}=1$.
    \item Each producer maximizes its profit under the payment: $\lambda_t^{*}g_t^{*}$~for the energy produced by the conventional generator, $\theta_t^{*}(p_{t}^{*}/{\eta }-b_{t}^{*}{\eta })$~for the energy produced by energy storage, and $\pi_t^{*}(\varphi_{t}+\psi_{t})$~for the reserve provided by conventional generator and energy storage.
\end{enumerate}
\end{prop}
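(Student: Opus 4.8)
The plan is to run the standard primal--dual decomposition argument for competitive equilibria. Once \eqref{st} is relaxed (or the charge/discharge status is fixed), problem \eqref{MCCED} in its deterministic reformulation \eqref{expectation}--\eqref{dr} is a convex program with linear constraints (with $G$ convex, the standard generator cost-curve assumption, the objective inherits convexity through the affine arguments $g_t+\varphi_t\bm d_t$ and $p_t+\psi_t\bm d_t$). Hence strong duality holds and the KKT conditions are necessary and sufficient for optimality. I would therefore take $\{g_t^{*},p_t^{*},b_t^{*},e_t^{*},\varphi_t^{*},\psi_t^{*}\}$ together with $\{\lambda_t^{*},\theta_t^{*},\pi_t^{*}\}$ and the inequality multipliers $\{\underline\nu_t,\overline\nu_t,\underline\alpha_t,\overline\alpha_t,\underline\beta_t,\overline\beta_t,\underline\iota_t,\overline\iota_t,\dots\}$, and show that this single KKT system splits into (i) the operator's market-clearing conditions and (ii) one profit-maximization KKT system per agent.

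Condition~(1) is immediate: an optimal solution is primal feasible, hence it satisfies the balance constraint \eqref{pb}, the SoC recursion \eqref{SoC}, and the reserve-allocation constraint \eqref{rgt}, which are exactly the three clearing equations stated. For condition~(2) I would form the Lagrangian of \eqref{MCCED}, dualizing the coupling equalities \eqref{pb} and \eqref{rgt} with $\lambda_t,\pi_t$ and the private recursion \eqref{SoC} with $\theta_t$, and leaving the remaining constraints \eqref{rgre}, \eqref{rgulbound}, \eqref{pcbound}, \eqref{pdbound}, \eqref{rsulbound} attached to their respective agents. After discarding the terms $\lambda_t^{*}D_t$ and $\pi_t^{*}$ that do not depend on any decision, the Lagrangian separates additively into a generator block in $(g_t,\varphi_t)$ and a storage block in $(p_t,b_t,e_t,\psi_t)$.

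I would then write down, for each agent, its profit-maximization problem at the announced prices: revenue $\lambda_t^{*}g_t+\pi_t^{*}\varphi_t$ minus $\mathbb E\,G(\cdot)$ for the generator, subject to \eqref{rgre} and \eqref{rgulbound}; and revenue $\theta_t^{*}(p_t/\eta-b_t\eta)+\pi_t^{*}\psi_t$ minus $\mathbb E\,M(\cdot)$ for storage, subject to its private constraints \eqref{SoC}, \eqref{rgre}, \eqref{pcbound}--\eqref{rsulbound}. The key verification is that the KKT system of each agent's problem coincides term-by-term with the restriction of the central KKT system to that agent's variables and its private multipliers; since each agent's problem is convex, these KKT conditions are sufficient, so $(g_t^{*},\varphi_t^{*})$ and $(p_t^{*},b_t^{*},e_t^{*},\psi_t^{*})$ are profit-maximizing, which is condition~(2). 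Finally, the qualifier ``robust'' follows because the chance constraints enter both the central problem and every subproblem only through their deterministic reformulations \eqref{dr1}--\eqref{dr3} (equivalently the Bonferroni forms \eqref{ba}--\eqref{baa}); hence the equilibrium dispatch respects every generation and storage limit with probability at least $1-\epsilon$, i.e., the equilibrium is robust in the chance-constrained sense rather than merely in expectation.

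I expect the main obstacle to be the bookkeeping in the third step: checking that dualizing \eqref{pb}, \eqref{SoC}, and \eqref{rgt} attributes revenue to the two agents exactly in the claimed form, in particular that storage's compensation $\theta_t^{*}(p_t^{*}/\eta-b_t^{*}\eta)$, combined with the recursion \eqref{SoC} and the SoC-bound multipliers, reproduces the central stationarity conditions in $p_t$, $b_t$, and $e_t$ with the efficiency factor $\eta$ entering in the correct places, and that after the split no cross terms between generator and storage variables survive. A secondary, routine point is confirming the constraint qualification for the chance-constrained formulation, which is dispatched by noting that the reformulation \eqref{dr} is linear.
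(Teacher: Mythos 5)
Your proposal is correct and, at bottom, rests on the same pillar as the paper's proof---duality of the relaxed, deterministically reformulated welfare problem---but you execute it along a genuinely more careful and more general route. The paper's argument is a one-line appeal to the problem being a \emph{linear} program (invoking the Simplex and vertex optimality) plus ``Duality Theory''; this is only literally valid when $G$ is piecewise linear, whereas the rest of the paper (Propositions 2--4) works with quadratic and super-quadratic $G$. Your convex-programming version---strong duality under a linear constraint qualification for \eqref{dr}, KKT necessity and sufficiency, and an explicit additive split of the Lagrangian into a generator block in $(g_t,\varphi_t)$ and a storage block in $(p_t,b_t,e_t,\psi_t)$ whose agent-level KKT systems coincide with the central ones---covers all the cost functions the paper actually uses and supplies the decomposition step the paper leaves entirely implicit. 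The one point you flag as ``bookkeeping'' deserves to be stated explicitly rather than deferred: when you dualize \eqref{SoC} with $\theta_t$, the storage block carries the full term $\theta_t^{*}(e_{t+1}-e_t+p_t/\eta-b_t\eta)$, and the claimed settlement $\theta_t^{*}(p_t^{*}/\eta-b_t^{*}\eta)$ is recovered only because primal feasibility makes this term vanish at the optimum (equivalently, the payment telescopes against $\theta_t^{*}(e_t^{*}-e_{t+1}^{*})$); the stationarity condition in $e_t$, namely \eqref{dre2}, is what ties consecutive $\theta_t^{*}$ together and makes the storage subproblem's KKT system close. Both you and the paper rely on the same essential caveat---that the complementarity constraint \eqref{st} is relaxed or the charge/discharge status fixed, without which the feasible set is nonconvex and neither strong duality nor KKT sufficiency applies---so neither argument establishes equilibrium for the original disjunctive problem, only for its convexification.
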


\begin{proof}
The proof is trivial. Since problem~\eqref{MCCED} is a linear programming problem after relaxation and deterministic reformulation, the \textit{Simplex} guarantees that the optimum is attained at the vertices (i.e., bounds of inequality), and \textit{Duality Theory} guarantees a robust equilibrium.
\end{proof}

We provide the Lagrange function and KKT conditions of problem~\eqref{MCCED} in Appendix~\ref{Lagrange}.

\section{Main Results}\label{MR}

This section presents a theoretical pricing analysis to demonstrate the SoC monotonicity and uncertainty monotonicity of storage opportunity price. We then prove that the storage market power can be regulated and the opportunity price is bounded by energy and reserve prices.

\subsection{SoC Monotonicity of Storage Opportunity Price}
We first show that the opportunity cost function of the storage SoC is convex, i.e., the opportunity price (derivative of the opportunity cost) monotonically decreases with SoC. Hence, it can be efficiently incorporated into economic dispatch using piecewise linear approximations similar to generator cost curves. To demonstrate this, the following proposition proves that the price of storage opportunity monotonically decreases with the storage SoC.
\begin{prop}\label{prop2} \textbf{Convex opportunity cost function.}
   Given a monotonically increasing and quadratic/super-quadratic cost function~$G$, if the storage power is not binding with the power bounds, then we have ${\partial {{\theta }_{t}}}/{\partial {{e}_{t}}}\le 0$.  If the storage power is binding with the power bounds, then we have ${\partial {\sup({\theta }_{t}})}/{\partial {{e}_{t}}}\le 0$, ${\partial {\inf({\theta }_{t}})}/{\partial {{e}_{t}}}\le 0$. 
\end{prop}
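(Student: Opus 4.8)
The plan is to read the opportunity price $\theta_t$ off the KKT stationarity conditions of the reformulated (deterministic) linear program and then show it is nonincreasing in $e_t$ by a parametric/sensitivity argument. First I would write the stationarity condition associated with the SoC variable $e_t$: from the Lagrangian of \eqref{MCCED}, $e_t$ appears in the SoC-coupling equality \eqref{SoC} with multiplier $\theta_t$ (and $\theta_{t-1}$ from the previous period) and in the reformulated storage-energy bounds \eqref{dr3} with multipliers $\underline{\iota}_t,\overline{\iota}_t$. The stationarity condition then reads, schematically, $\theta_{t-1}-\theta_t+\overline{\iota}_t-\underline{\iota}_t=0$, i.e. $\theta_t=\theta_{t-1}+\overline{\iota}_t-\underline{\iota}_t$. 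Unrolling this recursion back to the terminal condition expresses $\theta_t$ as a telescoping sum of the SoC-bound multipliers over future periods. Because those multipliers are nonnegative and because increasing $e_t$ relaxes the lower SoC bound $(\psi_t\widetilde d_t+p_t)/\eta\le e_t$ while tightening the upper SoC bound $e_t\le\overline E-(b_t-\psi_t\widehat d_t)\eta$, a marginal increase in $e_t$ can only decrease $\underline{\iota}_t$-type contributions and increase $\overline{\iota}_t$-type contributions — but since $\theta_t$ couples to $-\underline{\iota}$ with the sign that makes the opportunity price fall as stored energy grows, the monotonicity $\partial\theta_t/\partial e_t\le 0$ follows. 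The "convex cost function" hypothesis enters here: it guarantees that the dispatch response to a perturbation of $e_t$ does not flip the active set in a way that reverses the sign, because the second-stage correction term $\varphi_t\bm d_t$ enters a convex $G$, so marginal generation cost is monotone and the value function in $e_t$ is convex.

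More carefully, I would argue at the level of the optimal value function $V(e_t)$: define $V$ as the optimal objective of \eqref{MCCED} as a function of the SoC level $e_t$ entered as a parameter (equivalently, parameterize the right-hand side of \eqref{SoC}). Then $\theta_t=-\partial V/\partial e_t$ is (minus) the sensitivity, and the claim $\partial\theta_t/\partial e_t\le 0$ is exactly convexity of $V$ in $e_t$. Convexity of $V$ follows from convexity of the feasible region in $(e_t,\text{other vars})$ jointly — which holds after the deterministic reformulation \eqref{dr}, since all reformulated constraints are affine — together with convexity of the objective, which is where the quadratic/super-quadratic (hence convex) assumption on $G$ and the expectation reformulation \eqref{expectation} are used: $\mathbb E\,G(g_t+\varphi_t\bm d_t)$ is convex in $(g_t,\varphi_t)$ whenever $G$ is convex, and $M(p_t+\psi_t\mu_t)$ is linear. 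A convex program with a parameter appearing linearly in the constraints has a convex optimal value function, so $V$ is convex in $e_t$ and $\theta_t=-V'(e_t)$ is nonincreasing. When the storage power variables are binding at the power bounds, $V$ may be only piecewise smooth and the multiplier $\theta_t$ becomes set-valued (a subdifferential interval $[\inf\theta_t,\sup\theta_t]$); the same convexity of $V$ gives that both the left-derivative $\inf\theta_t$ and the right-derivative $\sup\theta_t$ are nonincreasing in $e_t$, which is the second assertion.

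The main obstacle I anticipate is handling the complementarity constraint \eqref{st} ($b_tp_t=0$) and the active-set changes: strictly speaking $V$ need not be convex globally because \eqref{st} is nonconvex, so I would invoke Proposition~\ref{p1}'s standing assumption that \eqref{st} is relaxed or the charge/discharge regime is fixed, restoring a genuine LP/convex-QP for which the envelope/sensitivity argument is clean. A second, more technical point is justifying differentiability of $\theta_t$ in $e_t$ (needed to write $\partial\theta_t/\partial e_t$) away from the finitely many breakpoints where the optimal basis changes; at breakpoints one passes to one-sided derivatives, which is precisely why the statement is phrased with $\sup$ and $\inf$ in the binding case. I would also need to verify that the future-period multipliers $\overline{\iota}_\tau,\underline{\iota}_\tau$ entering the telescoped expression for $\theta_t$ do not themselves depend on $e_t$ in a way that defeats monotonicity — but this is subsumed by the global convexity-of-$V$ argument, so I would lead with the value-function approach and use the explicit KKT telescoping only to give the "market intuition" (opportunity price = discounted stream of future SoC shadow prices).
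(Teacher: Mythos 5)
Your high-level intuition (convexity of the dispatch problem should make the marginal value of stored energy monotone) is the right one, but the central identification on which your main argument rests is incorrect, and without it the proof does not close. You assert that $\theta_t=-\partial V/\partial e_t$ where $V$ is the optimal value of \eqref{MCCED} with $e_t$ entered as a parameter. Applying the envelope theorem to the Lagrangian \eqref{lag1} actually gives $\partial V/\partial e_t=\theta_{t-1}-\theta_t-\underline{\iota}_t+\overline{\iota}_t$, which is precisely the stationarity condition \eqref{dre2} and vanishes when $e_t$ is a free decision variable. So the parameterized value function does not hand you $\theta_t$; it hands you the \emph{difference} of consecutive opportunity prices plus SoC-bound multipliers, and convexity of $V$ in $e_t$ therefore says something about $\theta_{t-1}-\theta_t$, not about $\partial\theta_t/\partial e_t$. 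Your first (telescoping) argument has the same problem: the recursion $\theta_t=\theta_{t-1}+\overline{\iota}_t-\underline{\iota}_t$ relates prices across time, and your sign argument for how the $\iota$'s respond to $e_t$ is asserted rather than proved. Your reading of the $\sup/\inf$ in the binding case as one-sided derivatives of $V$ also does not match what the statement means: in the paper $\sup(\theta_t)$ and $\inf(\theta_t)$ are the upper and lower bounds of the (non-unique) dual $\theta_t$, obtained from the charge-side condition \eqref{dc2} and the discharge-side condition \eqref{dd2} respectively when one of the complementary multipliers $\overline{\alpha}_t,\overline{\beta}_t$ is strictly positive.

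The missing idea is an explicit formula for $\theta_t$ as a function of $e_t$. The paper gets it by combining three ingredients you never use together: the generator stationarity \eqref{dg}, which identifies $\lambda_t$ with the expected marginal cost $H(g_t+\varphi_t\bm d_t)$; the storage stationarity \eqref{dc}/\eqref{dd}, which in each charge/discharge/idle regime gives $\theta_t=\lambda_t/\eta$ or $\theta_t=\eta(\lambda_t-M)$ (exactly, or as a bound when a power limit binds); and the power balance \eqref{pb} together with the SoC dynamics \eqref{SoC}, which eliminate $p_t,b_t$ to give $g_t=D_t+(e_{t+1}-e_t)/\eta$ (charging) or $g_t=D_t+(e_{t+1}-e_t)\eta$ (discharging). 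Chaining these yields $\theta_t=H\bigl(\bm D_t+(e_{t+1}-e_t)/\eta\bigr)/\eta$ and its analogues, after which $\partial\theta_t/\partial e_t\le 0$ follows in one line from monotonicity of $H=\partial\,\mathbb{E}G$, i.e.\ from convexity of $G$. This is where the quadratic/super-quadratic hypothesis actually does its work; in your write-up it only appears to guarantee convexity of the feasible-value function, which is not the object the proposition is about. Your observation that \eqref{st} must be relaxed (per Proposition~\ref{p1}) to make the KKT analysis legitimate is correct and worth keeping.
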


This result fits the diminishing storage value; hence, the marginal value decreases with storage SoC levels. Note that, as shown in~\eqref{SoC}, discharge power reduces SoC while charge power increases SoC. Hence, Proposition~\ref{prop2} shows that given an initial SoC, the discharging price will increase with higher discharge power, while the willingness to charge decreases with more charging power. This aligns with current convex generator cost curves and flexible demand bids. We defer the complete proof to Appendix~\ref{appendix1}.

Proposition~\ref{prop2} also aligns with prior studies that show the storage opportunity function is convex, but were derived using a price-taker profit maximization framework~\cite{xu2020operational, zheng2023energy}. In contrast, we derive Proposition~\ref{prop2} using a social welfare maximization framework. These results indicate that a convex SoC cost function can serve both system operator dispatch (social welfare maximization) and the bids submitted by storage participants (profit maximization).


\begin{corollary}\label{cl1}\textbf{Eliminated bounds gap in ideal storage.}
   Given a monotonically increasing and quadratic cost function~$G$, if $\eta=1$, then ${\partial {{\theta }_{t}}}/{\partial {{e}_{t}}}\le 0$ always holds true.  
\end{corollary}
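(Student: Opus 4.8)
The plan is to obtain Corollary~\ref{cl1} as the $\eta=1$, quadratic-$G$ specialization of Proposition~\ref{prop2}, the only extra work being to show that the ``bounds gap'' $\sup(\theta_t)-\inf(\theta_t)$ appearing in the power-binding case of Proposition~\ref{prop2} actually collapses to zero here. First I would recall how $\theta_t$ is characterized in the proof of Proposition~\ref{prop2}: writing the KKT stationarity conditions of the reformulated (hence linear/convex) program~\eqref{MCCED} with respect to the discharge variable $p_t$ and the charge variable $b_t$ yields two one-sided identities for $\theta_t$. Schematically, on the discharge branch ($p_t>0$, $b_t=0$) one gets $\theta_t=-\eta(\lambda_t+M)-\underline{\iota}_t+\eta(\underline{\beta}_t-\overline{\beta}_t)$, and on the charge branch ($b_t>0$, $p_t=0$) one gets $\theta_t=-\lambda_t/\eta+\overline{\iota}_t+(\overline{\alpha}_t-\underline{\alpha}_t)/\eta$, where $\lambda_t$ is fixed by the generator stationarity as $\lambda_t=G'(g_t+\varphi_t\mu_t)$ up to the binding-bound multipliers $\underline{\nu}_t,\overline{\nu}_t$. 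When the storage power is not binding, the relevant power-bound multiplier ($\overline{\beta}_t$ or $\overline{\alpha}_t$) is zero and $\theta_t$ is pinned down uniquely; when it is binding, that multiplier becomes a free nonnegative slack in the stationarity identity, so $\theta_t$ is only determined up to an interval $[\inf(\theta_t),\sup(\theta_t)]$ --- this interval is the bounds gap.

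Next I would set $\eta=1$ and show the interval degenerates to a point. With $\eta=1$ every factor $1/\eta$ and $\eta$ multiplying $\lambda_t$, $M$, and the SoC-chance and power-bound multipliers in the two branch identities becomes $1$, and the SoC-evolution constraint~\eqref{SoC} together with the SoC-bound constraint~\eqref{rsulbound} reduces to a single symmetric ``net-injection'' form; the economic content is that a marginal unit of stored energy has the same shadow value whether it is realized by discharging or restored by charging, so the dual-feasible set for $\theta_t$ is forced to be a singleton even on a binding interval, i.e. $\inf(\theta_t)=\sup(\theta_t)$. Here I would also invoke that $G$ is quadratic rather than merely super-quadratic: for quadratic $G$, \eqref{Ed} gives $\mathbb{E}\big(G(g_t+\varphi_t\bm d_t)\big)=G(g_t+\varphi_t\mu_t)+C_2\varphi_t^2\sigma_t^2$, so the variance/reserve term decouples from the energy stationarity and $\lambda_t$ depends on the dispatch only through $g_t+\varphi_t\mu_t$; for super-quadratic $G$ the higher moments $\mathbb{E}(\bm d_t)^k$, $k\ge 3$, re-enter the $g_t$- and $\psi_t$-stationarity, so the two branch expressions for $\theta_t$ keep distribution-dependent correction terms and the gap need not close even when $\eta=1$.

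Finally, once $\inf(\theta_t)=\sup(\theta_t)=:\theta_t$ is established, the conclusion is immediate: the non-binding case of Proposition~\ref{prop2} already gives $\partial\theta_t/\partial e_t\le 0$, and in the binding case Proposition~\ref{prop2} gives $\partial\sup(\theta_t)/\partial e_t\le 0$ and $\partial\inf(\theta_t)/\partial e_t\le 0$, which now both read $\partial\theta_t/\partial e_t\le 0$; hence the inequality holds in all cases.

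The main obstacle is the middle step --- rigorously verifying that with $\eta=1$ no \emph{other} source of dual degeneracy can reopen the gap. One has to inspect the full KKT system rather than the $p_t$- and $b_t$-stationarity in isolation: in particular, simultaneous activity of a storage power bound, a SoC-energy bound in~\eqref{rsulbound}, and the reserve-allocation bounds $0\le\varphi_t,\psi_t\le 1$ could in principle leave $\theta_t$ under-determined, and I would need to show the $\eta=1$ coefficient structure (together with the quadratic form of $G$ via~\eqref{Ed}) still forces uniqueness. It is also worth re-checking that the complementarity relaxation of~\eqref{st} assumed in Proposition~\ref{p1} is in force, since the branch-by-branch argument relies on treating the charge and discharge modes as a fixed (relaxed) choice.
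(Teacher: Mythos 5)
Your overall skeleton is the same as the paper's: show that the $\sup(\theta_t)$/$\inf(\theta_t)$ interval appearing in the power-binding cases of Proposition~\ref{prop2} collapses when $\eta=1$, and then read $\partial\theta_t/\partial e_t\le 0$ off the one-sided derivative bounds already proved there. The paper's proof is exactly this observation, asserted as trivial, so your framing and your final step are aligned with it.

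The genuine gap is in your middle step, and moreover you attribute the gap closure to the wrong mechanism. In the proof of Proposition~\ref{prop2} the binding cases yield $\sup(\theta_t)=H\big(\bm{D}_{t}+(e_{t+1}-e_{t})\eta\big)/\eta$ and $\inf(\theta_t)=\eta H\big(\bm{D}_{t}+(e_{t+1}-e_{t})/\eta\big)-\eta M$: the interval between them is generated entirely by the $\eta$-versus-$1/\eta$ asymmetry (in both the prefactor and the argument of $H$) together with the marginal-cost offset $M$. Setting $\eta=1$ makes the two expressions coincide (up to $M$, which the paper itself only acknowledges in the discussion after the corollary) for \emph{any} increasing $H$; nothing about this depends on $G$ being quadratic rather than super-quadratic, so your claim that higher moments of $\bm{d}_{t}$ would keep the gap open for super-quadratic $G$ is not supported by the KKT structure --- the moments enter $H$ itself, not the multiplier slack. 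Your schematic branch identities also carry sign errors on $\lambda_t$: from \eqref{dc2} one gets $\theta_t\eta=\lambda_t+\overline{\alpha}_t-\underline{\alpha}_t+\overline{\iota}_t\eta$, and from \eqref{dd2} one gets $\theta_t/\eta=\lambda_t-M+\underline{\beta}_t-\overline{\beta}_t-\underline{\iota}_t/\eta$, whereas you wrote $-\lambda_t/\eta$ and $-\eta\lambda_t$ respectively. Most importantly, you explicitly defer the verification that the gap closes as an unresolved ``obstacle''; since that closure is the entire content of the corollary beyond Proposition~\ref{prop2}, the proposal as written does not complete the proof. The fix is short: substitute $\eta=1$ directly into Cases 3--5 of the Proposition~\ref{prop2} proof and observe that the sup and inf expressions become the same function of $e_t$, so the interval degenerates and the common derivative is $\le 0$.
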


\begin{proof}
   The proof is trival that the upper and lower bounds of $\theta$ are equal when $\eta=1$, hence ${\partial {{\theta }_{t}}}/{\partial {{e}_{t}}}$= ${\partial {\sup({\theta }_{t}})}/{\partial {{e}_{t}}}={\partial {\inf({\theta }_{t}})}/{\partial {{e}_{t}}}\le 0$.
\end{proof}
Corollary~\ref{cl1} demonstrates that the gap between the lower and upper bounds of the opportunity cost function's subgradient will be eliminated for ``ideal'' storage. The rationale is that without energy conversion losses, the increased discharge price rate with SoC should equal the decreased charging price rate with SoC. Moreover, the first two cases in the proof verify that the charging and discharging prices should be the same when overlooking the marginal degradation cost and energy conversion losses.

\subsection{Uncertainty Monotonicity of Storage Opportunity Price}
We further demonstrate that the storage opportunity price increases with system uncertainty. This is a significant difference compared to conventional thermal generators, whose cost curve should always be based on fuel cost curves regardless of the uncertainty exposed.
\begin{prop}\label{prop3}\textbf{Uncertainty-aware opportunity cost function.}
   Given a super-quadratic function $G$, then we have $\partial {{\theta }_{t}}/\partial {{\sigma }_{t}}> 0$. While, given a quadratic function $G$, then we have $\partial {{\theta }_{t}}/\partial {{\sigma }_{t}}=0$. 
\end{prop}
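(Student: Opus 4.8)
The plan is to reduce $\partial\theta_t/\partial\sigma_t$ to the sensitivity of the energy price $\lambda_t$ and then differentiate the polynomial moment expansion \eqref{Ed}--\eqref{Ednormal}. First I would recall, from the KKT system of \eqref{MCCED} (Appendix~\ref{Lagrange}, the same stationarity relations used to prove Proposition~\ref{prop2}), that with the generator power bound \eqref{dr1} slack, stationarity in $g_t$ gives $\lambda_t=\mathbb E\big[G'(g_t+\varphi_t\bm d_t)\big]$, while stationarity in $p_t$ and $b_t$ together with $b_tp_t=0$ yields, up to the sign convention of Appendix~\ref{Lagrange}, an affine relation $\theta_t=\eta(\lambda_t-M)$ on discharging intervals and $\theta_t=\lambda_t/\eta$ on charging intervals; when the storage power is binding, the same affine form --- with the \emph{same} positive slopes $\eta$ and $1/\eta$ --- governs $\sup(\theta_t)$ and $\inf(\theta_t)$, and these collapse to $\theta_t=\lambda_t$ when $\eta=1$, $M=0$, consistent with Corollary~\ref{cl1}. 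Hence $\operatorname{sign}(\partial\theta_t/\partial\sigma_t)=\operatorname{sign}(\partial\lambda_t/\partial\sigma_t)$, and it suffices to sign $\partial\lambda_t/\partial\sigma_t$.

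Next I would substitute $G(x)=\sum_{i=0}^nC_ix^i$ to obtain $\lambda_t=\sum_{i=1}^niC_i\sum_{m=0}^{i-1}\binom{i-1}{m}g_t^{i-1-m}\varphi_t^{m}\,\mathbb E(\bm d_t)^m$ and read off, via \eqref{Ednormal}, that $\mathbb E(\bm d_t)^m$ is $\sigma_t$-free for $m\le1$, while for an unbiased (zero-mean) forecast error $\mathbb E(\bm d_t)^m=\sigma_t^{m}(m-1)!!$ for even $m$ and $0$ for odd $m$, so $\partial\,\mathbb E(\bm d_t)^m/\partial\sigma_t=m\sigma_t^{m-1}(m-1)!!>0$ for every even $m\ge2$. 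In the quadratic case $n=2$, $G'$ is affine, only $m\in\{0,1\}$ appear, and $\lambda_t=C_1+2C_2(g_t+\varphi_t\mu_t)$ carries no $\sigma_t$; moreover, under the standing non-binding hypotheses, $\sigma_t$ enters the reformulated program \eqref{dr} only through the separable term $C_2\varphi_t^2\sigma_t^2$ of \eqref{Ed} and through quantiles attached to slack constraints, so the optimal $(g_t,\varphi_t)$, hence $\lambda_t$, is itself $\sigma_t$-invariant and $\partial\theta_t/\partial\sigma_t=0$. In the super-quadratic case $n\ge3$, $\lambda_t$ contains at least one term with $m\ge2$ whose coefficient $iC_i\binom{i-1}{m}g_t^{i-1-m}\varphi_t^{m}$ is strictly positive (using $C_n>0$, $C_i\ge0$ for a monotone convex $G$, $g_t\ge0$, and $\varphi_t>0$, i.e.\ the generator carries some reserve), and each such $\mathbb E(\bm d_t)^m$ strictly increases in $\sigma_t$; thus $\partial\lambda_t/\partial\sigma_t>0$, and multiplying by the positive slope of Step~1 gives $\partial\theta_t/\partial\sigma_t>0$, with the identical argument applied to $\sup(\theta_t)$ and $\inf(\theta_t)$ in the binding case.

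The step I expect to be the main obstacle is the \emph{indirect} dependence: in the super-quadratic case the optimal $(g_t,\varphi_t)$ and the active-constraint pattern also shift with $\sigma_t$, so one must verify that the total derivative keeps the sign of the direct term. I would discharge this either by a comparative-statics argument --- a larger $\sigma_t$ raises the marginal cost of generator-provided reserve, which reallocates reserve toward storage and thereby only enlarges the Jensen gap $\mathbb E[G'(g_t+\varphi_t\bm d_t)]-G'(g_t+\varphi_t\mu_t)$ that pushes $\lambda_t$ up --- or, more compactly, by Stein's identity for the Gaussian forecast error, which renders $\partial\theta_t/\partial\sigma_t$ proportional to $\varphi_t^{2}\,\sigma_t\,\mathbb E\!\big[G'''(g_t+\varphi_t\bm d_t)\big]$: this is identically $0$ when $G'''\equiv0$ (the quadratic case) and strictly positive for any monotone convex super-quadratic $G$, so the sign is insensitive to where $(g_t,\varphi_t)$ lands. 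I would also note that the need to fix a distribution for $\bm d_t$ in the super-quadratic case (as in the remark following \eqref{dr}) is exactly what makes the higher moments $\mathbb E(\bm d_t)^m$, $m\ge3$, well defined and monotone in $\sigma_t$, so the two regimes of the proposition mirror the two regimes of the cost-function reformulation.
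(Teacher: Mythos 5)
Your proposal is correct and follows essentially the same route as the paper: it passes through the KKT stationarity relations of Proposition~\ref{prop2} to tie $\theta_t$ affinely (with positive slope) to $\lambda_t=\mathbb E[G'(g_t+\varphi_t\bm d_t)]$, then differentiates the Gaussian moment expansion of the polynomial marginal cost in $\sigma_t$, recovering exactly the dichotomy the paper computes in \eqref{dsigma}--\eqref{dn4}. Your two refinements --- the Stein-identity form $\partial\lambda_t/\partial\sigma_t=\varphi_t^2\sigma_t\,\mathbb E[G''']$, which makes the quadratic/super-quadratic split immediate, and the explicit caveats that strict positivity needs $\varphi_t>0$, $\sigma_t>0$, nonnegative coefficients, and that the paper is really signing a partial derivative at fixed $(g_t,\varphi_t)$ rather than the total derivative along the optimal dispatch --- are genuine improvements in precision over the paper's proof, which asserts positivity of the cross-terms without stating these conditions.
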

\begin{proof}
Take Case 1~\eqref{case1} in proof of Proposition~\ref{prop2} as an example, we assume that forecast error follows a normal distribution, then we have~\eqref{dsigma}. If $G$ follows a super-quadratic function ($n>2$), there are cross-terms of $\sigma_{t}$ and $g_{t}$ in the expectation cost function, hence the derivative function in~\eqref{dsigma} should be positive. If $G$ follows a quadratic function ($n=2$), there is no cross-terms of $\sigma_{t}$ and $g_{t}$ in the expectation cost function, hence the derivative function in~\eqref{dsigma} should be zero. And we provide results for  cases $n=2\text{,}3\text{,}4$ in~\eqref{dn2}-\eqref{dn4}.
\begin{subequations}
    \begin{align}
&\dfrac{\partial{\theta }_{t}}{\partial \sigma_{{t}}}=\dfrac{1}{\eta}\sum_{i=0}^n C_i \sum_{k=0}^i\binom{i}{k} g_t^{i-k-1} \varphi_t^k (i-k)\times\label{dsigma}\\ 
&\hspace{0.6cm}\sum\limits_{\text 0 \leq j \leq k}^{j\text{ }even}\binom{k}{j} \mu_{t}^{k-j} {\sigma}_{t}^{j-1} j (j-1)!!\notag\\
   & {{\theta }_{t}}=\dfrac{1}{\eta}\big( C_1+C_2\left(2g_t+2 \varphi_t \mu_{t}\right)\big)\text{, }\dfrac{\partial{\theta }_{t}}{\partial \sigma_{{t}}}=0\text{, }n=2\label{dn2}\\
     &  {{\theta }_{t}}=\dfrac{1}{\eta}\big( C_1+2 C_2 g_t+2 C_2 \varphi_t \mu_t+3 C_3 g_t^2+6 C_3 g_t \varphi_t \mu_t\label{dn3}\\
     &+3 C_3 \varphi_t^2\left(\mu_t^2+\sigma_t^2\right)\big)\text{, }\dfrac{\partial{\theta }_{t}}{\partial \sigma_{{t}}}=6 C_3\varphi_t^2\sigma_t/\eta>0\text{, }n=3 \nonumber\\
     & {{\theta }_{t}}=\dfrac{1}{\eta}\big(C_1 + 2 C_2 g_t + 2 C_2 \varphi_t \mu_t + 3 C_3 g_t^2 + 6 C_3 g_t \varphi_t \mu_t \label{dn4}\\
     &+ 3 C_3 \varphi_t^2 (\mu_t^2 + \sigma_t^2) + 4 C_4 g_t^3 + 12 C_4 g_t^2 \varphi_t \mu_t \nonumber\\
     &+ 12 C_4 g_t \varphi_t^2 (\mu_t^2 + \sigma_t^2)+ 4 C_4 \varphi_t^3 (\mu_t^3 + 3 \mu_t \sigma_t^2)\nonumber\\
     &\dfrac{\partial{\theta }_{t}}{\partial \sigma_{{t}}}=\sigma_t\left(6 C_3 \varphi_t^2+24 C_4 g_t \varphi_t^2+24 C_4 \varphi_t^3 \mu_t\right)/\eta>0\text{, }n=4\nonumber
    \end{align}
\end{subequations}
Hence, we finished the proof.
\end{proof}
Note that the uncertainties lie not only in the net load but also in the look-ahead window of the dispatch. Proposition~\ref{prop3} demonstrates that with a super-quadratic function of \( G \), the storage opportunity cost will increase with higher non-anticipativity either from higher penetration of uncertain resources (e.g., renewables, electric vehicle) or
a longer look-ahead window. However, given a quadratic cost function, the opportunity cost remains stable. Proposition~\ref{prop3} also aligns with prior studies~\cite{xu2020operational, qin2024economic} that have shown the storage opportunity value scales with price uncertainty in price-taker profit-maximization objectives. Yet, we consider net load uncertainty from a social welfare maximization perspective.

\begin{prop}\label{prop4} \textbf{Mapping uncertainty with nonlinear opportunity cost function.}
Given a  monotonically increasing and super-quadratic function $G$, then we have $\mathbb{E}({{\theta }_{t}}(\bm{{d}_{t}}))>{{\theta }_{t}}(\mathbb{E}(\bm{{d}_{t}}))$.   
\end{prop}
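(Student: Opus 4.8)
The plan is to recognize Proposition~\ref{prop4} as a strict Jensen inequality for the net-load dependence of the opportunity price. First I would use the stationarity conditions of~\eqref{MCCED} (Appendix~\ref{Lagrange}) together with the derivation already carried out in Case~1~\eqref{case1} of the proof of Proposition~\ref{prop2} to identify, in the interior (non-binding storage) regime, the realized opportunity price with a positive multiple of the realized marginal generation cost, $\theta_t(\bm d_t)=\tfrac{1}{\eta}G'(g_t+\varphi_t\bm d_t)$, which is exactly the closed forms~\eqref{dn2}--\eqref{dn4} read pointwise in $\bm d_t$ rather than in expectation. Consequently $\theta_t(\mathbb{E}(\bm d_t))=\tfrac{1}{\eta}G'(g_t+\varphi_t\mu_t)$, while, consistently with~\eqref{dn3}--\eqref{dn4}, $\mathbb{E}(\theta_t(\bm d_t))=\tfrac{1}{\eta}\mathbb{E}[G'(g_t+\varphi_t\bm d_t)]$, so the statement reduces to showing $\mathbb{E}[G'(g_t+\varphi_t\bm d_t)]>G'(g_t+\varphi_t\mu_t)$.

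Next I would supply the convexity input. Writing $G(x)=\sum_{i=0}^n C_i x^i$ with $n\ge 3$ and the practically standard nonnegative coefficients (so that $G$ is monotonically increasing and convex on the operating range $x\ge 0$), the marginal cost $G'(x)=\sum_{i=1}^n iC_i x^{i-1}$ satisfies $G'''(x)=\sum_{i=3}^n i(i-1)(i-2)C_i x^{i-3}\ge 0$ for $x\ge 0$, strictly positive wherever some $C_i>0$ with $i\ge 3$; hence $G'$ is convex on $x\ge 0$ and is not affine on any nondegenerate subinterval. Composing this convex $G'$ with the affine map $\bm d_t\mapsto g_t+\varphi_t\bm d_t$ gives a convex function of $\bm d_t$, so Jensen's inequality yields $\mathbb{E}[G'(g_t+\varphi_t\bm d_t)]\ge G'(\mathbb{E}[g_t+\varphi_t\bm d_t])=G'(g_t+\varphi_t\mu_t)$, and dividing by $\eta>0$ preserves the inequality. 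Strictness follows because $\bm d_t$ has positive variance $\sigma_t^2$, $\varphi_t>0$, and $G'$ is strictly convex over the support of $g_t+\varphi_t\bm d_t$. As a sanity check, expanding the difference term by term reproduces the positive residual $\tfrac{1}{\eta}3C_3\varphi_t^2\sigma_t^2$ for $n=3$ and $\tfrac{\sigma_t^2}{\eta}\big(3C_3\varphi_t^2+12C_4 g_t\varphi_t^2+12C_4\varphi_t^3\mu_t\big)$ for $n=4$, matching~\eqref{dn3}--\eqref{dn4}; for a quadratic $G$ ($n=2$) the difference vanishes, confirming that the strict gap is a genuinely super-quadratic, $\sigma_t$-driven effect, in line with Proposition~\ref{prop3}.

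The hard part is not the Jensen step but the surrounding bookkeeping. First, one must extract the clean identity $\theta_t(\bm d_t)=\tfrac{1}{\eta}G'(g_t+\varphi_t\bm d_t)$ from the KKT system, keeping track of the reserve-balance multiplier $\pi_t$ and the allocation constraint~\eqref{rgt}. Second, in the binding regime $\theta_t$ becomes a subdifferential interval, so one would either argue the inequality separately for $\sup\theta_t$ and $\inf\theta_t$ (as in Propositions~\ref{prop2}--\ref{prop3}) or restrict attention to the interior dispatch as in Case~1. Third, strictness must be guaranteed, which requires excluding the degenerate situations $\varphi_t=0$ or $\sigma_t=0$, for which the claim collapses to an equality.
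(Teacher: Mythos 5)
Your proposal is correct and follows essentially the same route as the paper: identify $\theta_t(\bm{d}_t)$ in the interior (Case~1) regime with $\tfrac{1}{\eta}\,\partial G$ evaluated at the realized net load and apply Jensen's inequality to the convex, non-affine derivative of a super-quadratic $G$. Your treatment is in fact somewhat more careful than the paper's, since you make explicit the conditions needed for strictness ($\sigma_t>0$, $\varphi_t>0$, nonnegative coefficients so that $G'$ is strictly convex on the support) and verify the Jensen gap against the closed forms for $n=3,4$.
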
 
\begin{proof}
    Take Case 1~\eqref{case1} in proof of Proposition~\ref{prop2} as an example. The relationship between~\eqref{er1}~and~\eqref{er2} depends on the convexity of $\partial G$. If $G$ is a super-quadratic function, then $\partial G$ will be a convex (but not linear) function, hence, according to Jensen's inequality~\cite{dekking2005modern}, $\mathbb{E}({{\theta }_{t}}(\bm{{d}_{t}}))>{{\theta }_{t}}(\mathbb{E}(\bm{{d}_{t}}))$. The difference between the two sides of the inequality, is called the Jensen gap. 
\begin{subequations}
\begin{align}
   & \mathbb{E}({{\theta }_{t}}(\bm{ d_{{t}}}))=\dfrac{1}{{{\eta }}}\dfrac{\partial \mathbb{E} G( D_{t}+\varphi_{t}\bm{{d}_{t}}+(e_{t}-e_{t-1})/{\eta }))}{\partial ( D_{t}+\varphi_{t}\bm{{d}_{t}}+(e_{t}-e_{t-1})/{\eta })} \label{er1}\\
   & {{\theta }_{t}}(\mathbb{E}(\bm{d_{{t}}}))=\dfrac{1}{{{\eta }}}\dfrac{\partial G(D_{{t}}+\varphi_{t}\mathbb{E}( \bm{d_{{t}}})+(e_{t}-e_{t-1})/{\eta })}{\partial (D_{{t}}+\varphi_{t}\mathbb{E}( \bm{d_{{t}}})+(e_{t}-e_{t-1})/{\eta })}\label{er2}
\end{align}
\end{subequations}
Hence, we have finished the proof.
\end{proof}
\begin{corollary}\label{cl2}
    \textbf{Mapping uncertainty with linear opportunity cost function.} Given a monotonically increasing and quadratic function $G$, then we have $\mathbb{E}({{\theta }_{t}}(\bm{{d}_{t}}))={{\theta }_{t}}(\mathbb{E}(\bm{{d}_{t}}))$.
\end{corollary}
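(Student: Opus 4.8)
The plan is to leverage the machinery already established for Proposition~\ref{prop4} and simply observe that the strict Jensen inequality collapses to an equality exactly when the inner function is affine. Concretely, I would start from the same expressions~\eqref{er1} and~\eqref{er2}, which express $\mathbb{E}(\theta_t(\bm{d_t}))$ and $\theta_t(\mathbb{E}(\bm{d_t}))$ respectively as an expectation of $\partial G$ evaluated at an affine function of $\bm{d_t}$ versus $\partial G$ evaluated at the expectation of that same affine function. Since affine maps commute with expectation, the only question is whether $\partial G$ itself is affine in its argument.

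The key step is the elementary observation that if $G$ is quadratic, i.e.\ $G(x) = C_2 x^2 + C_1 x + C_0$, then $\partial G(x) = 2C_2 x + C_1$ is an affine function. For an affine function $h$ and any random variable $Z$ (here $Z = D_t + \varphi_t \bm{d_t} + (e_t - e_{t-1})/\eta$), we have $\mathbb{E}(h(Z)) = h(\mathbb{E}(Z))$ by linearity of expectation. Combining this with~\eqref{er1}--\eqref{er2} and the fact that $\mathbb{E}(Z) = D_t + \varphi_t \mathbb{E}(\bm{d_t}) + (e_t - e_{t-1})/\eta$ (because $D_t$, $\varphi_t$, $e_t$, $e_{t-1}$ are deterministic first-stage quantities), the two displayed expressions are literally equal, which is the claim $\mathbb{E}(\theta_t(\bm{d_t})) = \theta_t(\mathbb{E}(\bm{d_t}))$. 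One should also note, as in Proposition~\ref{prop4}, that this is carried out for Case~1~\eqref{case1}; the other cases in the proof of Proposition~\ref{prop2} follow by the same argument since the opportunity price retains its affine-in-$\bm{d_t}$ structure whenever $G$ is quadratic.

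I do not expect any real obstacle here; the corollary is the degenerate ``equality case'' of the preceding proposition, so the proof is essentially one line invoking linearity of expectation in place of Jensen's inequality. The only point requiring a modicum of care is to confirm that the argument inside $\partial G$ is genuinely affine (not merely convex) in $\bm{d_t}$ under the quadratic assumption, and that all the coefficients ($\varphi_t$, $D_t$, the SoC terms) are treated as constants with respect to the expectation operator — both of which are immediate from the two-stage structure of~\eqref{MCCED}. Accordingly, a complete proof amounts to: cite~\eqref{er1}--\eqref{er2}, note $\partial G$ is affine when $n=2$, apply $\mathbb{E}(h(Z)) = h(\mathbb{E}(Z))$, and conclude.
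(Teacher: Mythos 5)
Your proposal is correct and follows essentially the same route as the paper's own proof: both rest on the observation that $\partial G$ is affine when $G$ is quadratic, so the Jensen inequality in Proposition~\ref{prop4} collapses to an equality by linearity of expectation applied to \eqref{er1}--\eqref{er2}. Your version is somewhat more explicit about why the argument of $\partial G$ is affine in $\bm{d_t}$ and about the first-stage quantities being deterministic, but the substance is identical.
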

\begin{proof}
    This corollary is trivial to show. If $G$ is a monotonically increasing and quadratic function, then $\partial G$ will be a linear function, hence, according to Jensen's inequality, $\mathbb{E}({{\theta }_{t}}(\bm{{d}_{t}}))={{\theta }_{t}}(\mathbb{E}(\bm{{d}_{t}}))$. 
\end{proof}
Proposition~\ref{prop4} and Corollary~\ref{cl2} are the extreme cases of Proposition~\ref{prop3}, the left term of the inequality represents the case with uncertainty, while the right term represents the case without uncertainty. These results can also be explained by mapping uncertainty with the opportunity cost function: the super-quadratic case will introduce skewness and increase the mean value, while the quadratic case is a linear mapping.


\subsection{Regulated Market Power of Energy Storage}\label{OER}

Finally, we show that the storage opportunity price not only depends on future uncertainties but is also constrained by future energy and reserve prices. 
In practice, the system operator usually imposes price floors and ceilings. Hence, it is beneficial to use the proposed pricing mechanism as it can bind the opportunity price and regulate storage market power.

\begin{theorem}\label{tm} \textbf{Constrained opportunity price.}
The opportunity price of storage has a linear coupling relationship with the energy price and reserve cost. 
\begin{enumerate}
\item Charging: $\theta_{t-1}=\dfrac{\eta}{\widetilde{d}_{t}}\big(\theta_{t}\eta\widehat{d}_{t}+\lambda_{t}(\dfrac{\widetilde{d}_{t}}{\eta^2}-\widehat{d}_{t})+\pi_{t}-M\mu_{t} \big)$

\item Discharging: $\theta_{t-1}=\dfrac{1}{\eta\widehat{d}_{t}}\big(\dfrac{\theta_{t}\widetilde{d}_{t}}{\eta}+\lambda_{t}(\eta^2\widehat{d}_{t}-\widetilde{d}_{t})+\pi_{t}+M(\widetilde{d}_{t}-\eta^2\widehat{d}_{t}-\mu_{t}) \big)$ 

\item Idle: $\dfrac{1}{\eta\widehat{d}_{t}}\big(\dfrac{\theta_{t}\widetilde{d}_{t}}{\eta}+\lambda_{t}(\eta^2\widehat{d}_{t}-\widetilde{d}_{t})+\pi_{t}+M(\widetilde{d}_{t}-\eta^2\widehat{d}_{t}-\mu_{t}) \big)$$\leq\theta_{t-1}\leq$$\dfrac{\eta}{\widetilde{d}_{t}}\big(\theta_{t}\eta\widehat{d}_{t}+\lambda_{t}(\dfrac{\widetilde{d}_{t}}{\eta^2}-\widehat{d}_{t})+\pi_{t}-M\mu_{t}  \big)$
\end{enumerate}
\end{theorem}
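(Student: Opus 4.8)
The plan is to derive the three relations directly from the stationarity (KKT) conditions of the deterministic reformulation of problem~\eqref{MCCED}, exploiting the fact that the opportunity price $\theta_t$ is precisely the dual multiplier of the SoC dynamics~\eqref{SoC} and that, by Proposition~\ref{p1}, the optimum of the reformulated linear program is attained at vertices so the relevant inequality constraints bind. First I would write out the Lagrangian stationarity conditions with respect to the storage decision variables $e_t$, $p_t$, $b_t$, and the reserve ratio $\psi_t$ at time $t$, using the deterministic forms~\eqref{dr2}--\eqref{dr3} of the chance constraints with the quantiles $\widehat{d}_t$ and $\widetilde{d}_t$. The stationarity condition in $e_t$ links $\theta_{t-1}$, $\theta_t$ and the multipliers $\underline{\iota}_t$, $\overline{\iota}_t$ of the SoC bound constraints~\eqref{dr3}; the conditions in $p_t$ and $b_t$ link $\lambda_t$, $\theta_t$, $M\mu_t$, the reserve-related multipliers, and the same $\underline{\iota}_t$, $\overline{\iota}_t$; and the condition in $\psi_t$ brings in the reserve cost $\pi_t$ together with $\widehat{d}_t$, $\widetilde{d}_t$ and the marginal storage cost term $M\mu_t$.

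Next I would treat the three operating regimes separately. For \emph{charging} ($b_t>0$, $p_t=0$), I would identify which of the SoC bounds and power bounds are active at a vertex — the charging power bound in~\eqref{dr2} and the upper SoC bound in~\eqref{dr3} — set the corresponding slacks to zero, and eliminate the nuisance multipliers ($\underline{\alpha}_t$, $\overline{\alpha}_t$, $\underline{\iota}_t$, $\overline{\iota}_t$, and the reserve multipliers) by substituting between the $b_t$-, $e_t$- and $\psi_t$-stationarity equations. The coefficients $\widetilde{d}_t/\eta^2 - \widehat{d}_t$ and the factor $\eta/\widetilde{d}_t$ in the claimed charging relation should emerge from how $\widehat{d}_t$ enters the SoC upper bound (through the $(b_t-\psi_t\widehat{d}_t)\eta$ term) versus how $\widetilde{d}_t$ enters the SoC lower bound and the charging power bound. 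For \emph{discharging} ($p_t>0$, $b_t=0$) I would do the symmetric elimination using the discharge power bound and the lower SoC bound, which produces the reciprocal structure $1/(\eta\widehat{d}_t)$ and the $M(\widetilde{d}_t-\eta^2\widehat{d}_t-\mu_t)$ term. For the \emph{idle} case ($p_t=b_t=0$, or the complementarity~\eqref{st} relaxed with both at a boundary), neither power bound binds strictly, so the $e_t$-stationarity gives an \emph{interval} for $\theta_{t-1}$ whose endpoints are exactly the charging and discharging expressions — this follows because the idle regime is the limiting case sitting between the two active-bound regimes, and the sign conditions on the inequality multipliers ($\underline{\iota}_t,\overline{\iota}_t\ge 0$, etc.) pin down which endpoint is the upper and which the lower.

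The main obstacle I anticipate is bookkeeping: correctly tracking how the reserve ratio $\psi_t$ couples the charge/discharge power bounds and both SoC bounds simultaneously (since $\psi_t\bm{d}_t$ appears in all of~\eqref{pcbound}--\eqref{rsulbound}), and making sure the quantile substitutions $\widehat{d}_t=\mu_t-F^{-1}(1-\epsilon_i)\sigma_t$, $\widetilde{d}_t=\mu_t+F^{-1}(1-\epsilon_i)\sigma_t$ are threaded through consistently so that the asymmetric combinations like $\eta^2\widehat{d}_t-\widetilde{d}_t$ and $\widetilde{d}_t/\eta^2-\widehat{d}_t$ come out with the right signs. A secondary subtlety is justifying \emph{which} vertex is optimal in each regime — i.e. that the named bounds are the binding ones — which I would argue from the monotonicity of $G$ (so the generator is pushed toward a bound) combined with the economic logic that a charging unit is constrained by its upper SoC headroom and a discharging unit by its lower SoC floor; alternatively one can simply state the result conditionally on those bounds being active, consistent with the vertex-attainment claim in Proposition~\ref{p1}. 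Once the active set is fixed, the remaining steps are linear elimination and should be routine. I would also invoke Appendix~\ref{Lagrange} for the explicit Lagrangian and KKT system rather than re-deriving them here.
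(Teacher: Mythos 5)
Your overall strategy matches the paper's proof in Appendix~\ref{appendix2}: work from the stationarity conditions \eqref{dc2}, \eqref{dd2}, \eqref{dre2} and \eqref{de2}, split into cases on the charge/discharge/idle state, and linearly eliminate the multipliers of the power and SoC chance constraints, with the $\psi_t$-stationarity condition \eqref{de2} importing $\pi_t$ and the quantiles $\widehat{d}_t,\widetilde{d}_t$ exactly as you anticipate; the idle case yields a two-sided bound because $\underline{\alpha}_t$ and $\underline{\beta}_t$ are then free nonnegative multipliers, as you say.

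One concrete correction to your active-set reasoning, which as stated would derail the equality cases. For the charging equality you claim the charging power bound in \eqref{dr2} is active; it must be the opposite. The paper's Case~1 assumes $b_t\neq 0$ and $b_t\neq\overline{P}$, so that $\underline{\alpha}_t=\overline{\alpha}_t=\overline{\beta}_t=0$; then \eqref{dc2} and \eqref{de2} form a determined system in $(\overline{\iota}_t,\underline{\iota}_t)$, and substituting into \eqref{dre2} gives item~1 of the theorem. If instead the power bound were tight, $\overline{\alpha}_t\ge 0$ would be an extra free unknown appearing in both \eqref{dc2} and \eqref{de2}, and the same elimination only produces a one-sided bound on $\theta_{t-1}$ --- these are precisely the paper's Cases~3 and~4, which give $\sup$/$\inf$ versions and are not among the theorem's three items. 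Relatedly, the paper never invokes complementary slackness on the SoC bounds \eqref{dr3}: both $\underline{\iota}_t$ and $\overline{\iota}_t$ are retained as unknowns and solved for algebraically from the $b_t$- (or $p_t$-) and $\psi_t$-stationarity equations. Zeroing one of them by an ``identify the active SoC bound'' argument, as your plan suggests, would over-determine the system and yield a different formula. With the active set corrected in this way, the rest of your plan is the paper's derivation.
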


Theorem~\ref{tm} states that if storage jointly participates in the energy and reserve markets, the opportunity price will be constrained by the energy and reserve prices as well as the opportunity price from the past time. Moreover, this relationship varies depending on the charging and discharging states. Net load uncertainty will also impact the opportunity price. We defer the complete proof to Appendix~\ref{appendix2}.

\begin{corollary}\label{cl3} 
\textbf{Bounded opportunity price.}
Assume energy and reserve price have the bounds: $\lambda_{t}\in[\underline{\lambda}\text{, }\overline{\lambda}]$, $\pi_{t}\in[\underline{\pi}\text{, }\overline{\pi}]$, then the storage opportunity price are bounded: 

\begin{enumerate}
\item Charging Price: $\dfrac{\eta}{\widetilde{d}_{t}}\big(\underline{\lambda}(\dfrac{\widetilde{d}_{t}}{\eta^2}-\widehat{d}_{t})+\underline{\pi}-M\mu_{t}\big)\le\overline{\theta}_{t}\le\dfrac{-1}{\eta\widehat{d}_{t}}\big(\overline{\lambda}(\dfrac{\widetilde{d}_{t}}{\eta^2}-\widehat{d}_{t})+\overline{\pi}-M\mu_{t}\big)$

\item Discharging Price: $\dfrac{1}{\eta\widehat{d}_{t}}\big(\underline{\lambda}(\eta^2\widehat{d}_{t}-\widetilde{d}_{t})+\overline{\pi}+M(\widetilde{d}_{t}-\eta^2\widehat{d}_{t}-\mu_{t})\big)\le\overline{\theta}_{t}\le\dfrac{-\eta}{\widetilde{d}_{t}}\big(\overline{\lambda}(\eta^2\widehat{d}_{t}-\widetilde{d}_{t})+\underline{\pi}+M(\widetilde{d}_{t}-\eta^2\widehat{d}_{t}-\mu_{t})\big)$ 
\end{enumerate}
\end{corollary}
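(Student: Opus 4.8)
The plan is to take the linear coupling identities from Theorem~\ref{tm} as the starting point and simply propagate the interval bounds on $\lambda_t$ and $\pi_t$ through them. Recall that in the charging case Theorem~\ref{tm} gives
$\theta_{t-1}=\frac{\eta}{\widetilde d_t}\bigl(\theta_t\eta\widehat d_t+\lambda_t(\frac{\widetilde d_t}{\eta^2}-\widehat d_t)+\pi_t-M\mu_t\bigr)$,
and the discharging case gives the analogous identity with $\widehat d_t$ and $\widetilde d_t$ swapped and a different constant offset. The key observation is that $\overline\theta_t$ (the extreme/ceiling opportunity price at time $t$) is obtained by dropping the $\theta_t$-continuation term — i.e.\ treating the terminal-stage contribution as the binding case where no further opportunity value is carried, so $\theta_t\to 0$ in the recursion, or equivalently by taking the supremum over admissible continuation values. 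So first I would state precisely that $\overline\theta_t$ denotes the price obtained when the $\theta_t\eta\widehat d_t$ (resp.\ $\theta_t\widetilde d_t/\eta$) term is suppressed, matching the ``Idle'' bracketing in Theorem~\ref{tm}.

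Second, with that term removed, the charging-price expression becomes the affine function $\frac{\eta}{\widetilde d_t}\bigl(\lambda_t(\frac{\widetilde d_t}{\eta^2}-\widehat d_t)+\pi_t-M\mu_t\bigr)$ of the two free parameters $\lambda_t\in[\underline\lambda,\overline\lambda]$ and $\pi_t\in[\underline\pi,\overline\pi]$. Since this is affine (hence monotone) in each argument separately, its minimum and maximum over the box are attained at the appropriate corners; I would read off the signs of the coefficients $\frac{\eta}{\widetilde d_t}(\frac{\widetilde d_t}{\eta^2}-\widehat d_t)$ and $\frac{\eta}{\widetilde d_t}$ and assign $\underline\lambda,\underline\pi$ or $\overline\lambda,\overline\pi$ accordingly. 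This yields the lower bound displayed in item~1. The stated upper bound is written with a factor $\frac{-1}{\eta\widehat d_t}$, so here I would need to check the sign conventions on the quantiles $\widehat d_t,\widetilde d_t$ (which are defined as $\mu_t\mp F^{-1}(1-\epsilon_i)\sigma_t$ and may be negative) so that the rewriting of $\frac{\eta}{\widetilde d_t}$ as $\frac{-1}{\eta\widehat d_t}$ times the appropriate ratio is consistent — presumably using a relation between $\widehat d_t$ and $\widetilde d_t$ that holds when the relevant constraints are binding. Third, I would repeat the identical corner-evaluation for the discharging identity to obtain item~2.

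The main obstacle I anticipate is bookkeeping the signs: the quantiles $\widehat d_t$ and $\widetilde d_t$ can have either sign, $\eta\le 1$ so $\frac{\widetilde d_t}{\eta^2}-\widehat d_t$ and $\eta^2\widehat d_t-\widetilde d_t$ do not have obvious fixed signs, and the corollary's two bounds are written with different-looking prefactors ($\frac{\eta}{\widetilde d_t}$ versus $\frac{-1}{\eta\widehat d_t}$), which strongly suggests an implicit substitution is being used to pass between the ``charging recursion read backwards'' and ``discharging recursion read forwards'' forms. I would isolate that substitution explicitly — most likely it comes from the complementarity/binding conditions in the KKT system of Appendix~\ref{Lagrange} that tie $\widehat d_t$ to $\widetilde d_t$ when a storage power or SoC bound is active — and once that single algebraic link is pinned down, the rest is a routine case split on coefficient signs over a box, with no further analytic content. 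I would close by noting that the bounds are finite precisely because $\lambda_t,\pi_t$ are assumed bounded and $\widehat d_t,\widetilde d_t\neq 0$, which is where the ``regulated market power'' interpretation comes from.
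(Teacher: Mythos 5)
Your overall route is the same as the paper's: the paper's proof is a one-liner that says to take the identities of Theorem~\ref{tm} and substitute the extreme values of $\lambda_t$ and $\pi_t$, which is exactly your ``corner evaluation of an affine function over a box'' step, and your reading of $\overline{\theta}_t$ as the price with the continuation term suppressed matches the intent. The one place where you diverge is your conjectured explanation of the prefactor asymmetry ($\eta/\widetilde{d}_t$ on one side versus $-1/(\eta\widehat{d}_t)$ on the other): it does not come from a KKT-binding substitution tying $\widehat{d}_t$ to $\widetilde{d}_t$. It comes from reading the \emph{same} affine identity in the two directions. Writing the charging relation as $\theta_{t-1}=\tfrac{\eta^2\widehat{d}_t}{\widetilde{d}_t}\theta_t+\tfrac{\eta}{\widetilde{d}_t}\bigl(\lambda_t(\tfrac{\widetilde{d}_t}{\eta^2}-\widehat{d}_t)+\pi_t-M\mu_t\bigr)$, the lower bound is obtained by dropping the continuation term and taking $(\underline{\lambda},\underline{\pi})$; solving the same relation for $\theta_t$ instead gives $\theta_t=\tfrac{\widetilde{d}_t}{\eta^2\widehat{d}_t}\theta_{t-1}+\tfrac{-1}{\eta\widehat{d}_t}\bigl(\lambda_t(\tfrac{\widetilde{d}_t}{\eta^2}-\widehat{d}_t)+\pi_t-M\mu_t\bigr)$, which is where the $-1/(\eta\widehat{d}_t)$ prefactor appears. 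The sign facts you were worried about are fixed by the quantile definitions: for small $\epsilon$ and near-zero-mean forecast error, $\widehat{d}_t<0<\widetilde{d}_t$, so the coupling coefficient $\eta^2\widehat{d}_t/\widetilde{d}_t$ is negative, dropping a nonnegative $\theta_{t-1}$ (resp.\ $\theta_t$) term yields a one-sided bound, and both prefactors $\eta/\widetilde{d}_t$ and $-1/(\eta\widehat{d}_t)$ as well as the coefficient $\widetilde{d}_t/\eta^2-\widehat{d}_t$ are positive, which is what selects the lower corner $(\underline{\lambda},\underline{\pi})$ for one bound and the upper corner $(\overline{\lambda},\overline{\pi})$ for the other; the discharging case is identical with the roles of the two quantiles exchanged. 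With that single algebraic point pinned down, your argument closes and agrees with the paper's.
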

\begin{proof}
This corollary is trivial to demonstrate. Following Theorem~\ref{tm}, we can derive the opportunity price bounds after setting the energy and reserve prices as the bound values.
\end{proof}
Corollary~\ref{cl3} states that the opportunity price can be directly limited by the system operator. This aligns with previous work~\cite{qin2024economic} that has shown the upper bounds of storage bids with peak energy price limitation. However, in market practice, the opportunity price is unlikely to reach these theoretical bounds, as the extreme cases of energy and reserve prices typically do not occur simultaneously.

If the storage only participates in the energy market, we cannot derive these constrained and bounded relationship. However, the previously stated propositions and corollaries still hold true.  We will demonstrate the market power of storage under different market participations in subsequent case studies.

\section{Numerical Case Study}\label{Case Study}

\subsection{Set-Up}

We demonstrate the effectiveness of the proposed theoretical results with the following case studies based on the 8-zone ISO-NE test
system~\cite{krishnamurthy20158}~with an average load of 13 GW. The test system includes 76 thermal generators with a total installed capacity of 23.1 GW. 
Renewables and energy storage are configured based on certain percentage of load capacity. The forecast error of renewables and load are generated from Elia~\cite{Elia-data}. 
We set the renewables capacity to be 30\%, storage capacity and duration to be 20\% and 4-hr. Efficiency, marginal cost, and initial SoC of storage are set to be 95\%, \$20/MWh, and 50\%, unless otherwise specified.

The optimization is coded in MatLab and solved by Gurobi 11.0 solver. The programming environment is Intel Core i9-13900HX @ 2.30GHz with RAM 16 GB\footnote{The code and data used in this study are available at: \url{https://github.com/thuqining/Storage_Pricing_for_Social_Welfare_Maximization.git}}.
         
\subsection{Prices versus Risk-Aversion and Uncertainty Realizations}
The first case study shows the dependency of system prices and costs over risk preferences.
We compare the average energy price ($\mathbb E\lambda_{t}$), average storage opportunity price ($\mathbb E\theta_{t}$), total reserve cost ($\sum\pi_{t}$), and system cost of one simulation day under different risk aversion levels ($\epsilon$) and uncertainty realizations in Figure~\ref{price-risk}. Results show that as risk-aversion decreases (increase $\epsilon$), the reserve and system costs decrease. At the same time, both energy and storage prices are declined, which is consistent with the theoretical results from~\eqref{KKT-1}.

Additionally, to quantify the distribution approximation error, we calculate the RMSE between prices and costs under different distributions and those under the baseline empirical results (using quantiles from historical data) and summarize the results in Table~\ref{aperror}. The quantitative results show that the versatile distribution outperforms the other methods in terms of the optimality gap due to the least fitting error. Gaussian distribution is generally good, but the optimality gap increases with larger $\epsilon$ ($\epsilon\geq0.3$). However, using symmetric and unimodal distributions produces extremely conservative results for extremely risk-averse scenarios ($\epsilon \leq 0.05$), leading to a significant gap compared to the baseline empirical results. The above results demonstrate that the proposed pricing mechanism is adaptable to different levels of risk aversion and can generate relatively accurate prices under versatile and Gaussian distribution assumptions for forecast error.
\begin{figure}[!ht]
\vspace{-0.5em}
      \setlength{\abovecaptionskip}{-0.1cm}  
    \setlength{\belowcaptionskip}{-0.1cm} 
  \begin{center}  \includegraphics[width=1\columnwidth]{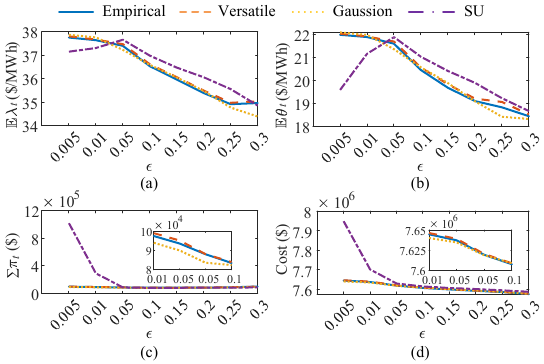}
    \caption{Prices and costs under different risk aversions and different uncertainty realization: (a) energy price, (b) storage opportunity price, (c) reserve cost, and (d) system cost.}\label{price-risk}
  \end{center}
  \vspace{-0.5cm}
\end{figure}

\begin{table}[!ht]
  \centering
  \caption{RMSE between Prices and Costs under Different Distributions and Those under the Baseline Empirical Results}
  \setlength{\tabcolsep}{1.2mm}{
      \begin{tabular}{c c c c c}
       \toprule
    Distributions & Energy Price & Storage Price & Reserve Cost & System Cost \\
    \midrule
    Versatile & 0.08 & 0.12 & 8.99E+02 & 1.28E+03 \\
    Gaussion & 0.23 & 0.20 & 3.44E+03 & 2.20E+03 \\
    SU    & 0.48 & 1.00 & 3.33E+05 & 1.10E+05 \\
    \bottomrule
    \end{tabular}%
    }\vspace{-0.3cm}
  \label{aperror}%
\end{table}%

\subsection{Opportunity Prices versus SoC and Net load Uncertainty}

We fix the risk-aversion level at $\epsilon=0.05$ and further investigate the storage opportunity prices under different SoC and net load uncertainty levels. We use the piecewise-quadratic cost function for conventional generators from the original ISO-NE test system, which is approximately super-quadratic as illustrated in Figure~\ref{costfunction}. By varying the initial SoC and the scale factor of $\sigma$, the results illustrated in Figure~\ref{figprop2} demonstrate that the storage opportunity price monotonically decreases with SoC, which verifies the Proposition~\ref{prop2}. When the initial SoC equals 100\%, the storage has no further charging capability or opportunity value; thus, its opportunity price drops dramatically to 0. 

The opportunity price generally increases with net load uncertainties, which verifies the Proposition~\ref{prop3}. However, given a highly volatile net load (scale factor $\geq 2.1$), the conventional generator will hit the lower bound due to insufficient downward reserve capacity, which will slightly decrease the energy price due to positive value of $\underline{\nu}_{t}$ in~\eqref{lg2}. And the proportional relationship between energy price and storage opportunity price from~\eqref{dc2} or~\eqref{dd2} explains the slight decrease in storage opportunity price in this case.
\begin{figure}[!ht]
\vspace{-0.5cm}
      \setlength{\abovecaptionskip}{-0.1cm}  
    \setlength{\belowcaptionskip}{-0.1cm} 
  \begin{center}  \includegraphics[width=0.82\columnwidth]{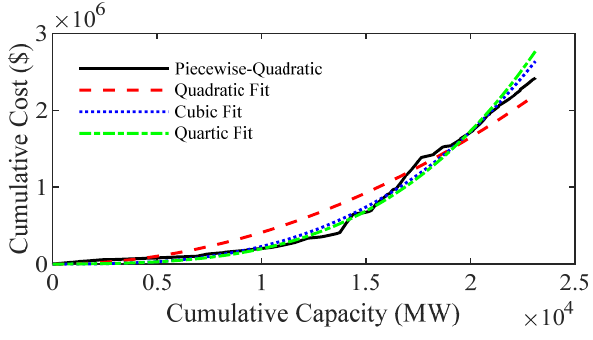}
    \caption{Cumulative cost curve of 76 conventional generators in the ISO-NE test system.}\label{costfunction}
  \end{center}
  \vspace{-0.5cm}
\end{figure}
\begin{figure}[!ht]
\vspace{-0.5em}
      \setlength{\abovecaptionskip}{-0.1cm}  
    \setlength{\belowcaptionskip}{-0.1cm} 
  \begin{center}  \includegraphics[width=0.82\columnwidth]{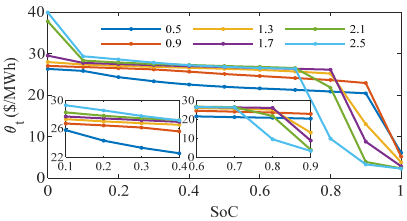}
    \caption{Opportunity prices under different initial SoC and uncertainty levels.}\label{figprop2}
  \end{center}
  \vspace{-0.3cm}
\end{figure}

We then test the sensitivity of opportunity price over uncertainty under different cost functions of the generator and summarize the average performance in Figure~\ref{figureprop4}. The cost function from the original test system is piecewise-quadratic (blue line). We remove the quadratic term to create a piecewise-linear cost function (red line), and then generate the quadratic function through fitting (yellow line). It is observed that the opportunity price increases with the net load uncertainty under the piecewise-quadratic and piecewise-linear functions. In contrast, under the fitted quadratic cost functions, the opportunity price remains fixed at \$19.4.  Moreover, it can be observed that as $\sigma$ continues to approach zero, under the piecewise-quadratic (super-quadratic) function, $\mathbb{E}({{\theta }_{t}}(\bm{{d}_{t}}))$ is always greater than ${{\theta }_{t}}(\mathbb{E}(\bm{{d}_{t}}))$. In contrast, under the quadratic function, $\mathbb{E}({{\theta }_{t}}(\bm{{d}_{t}}))$ equals ${{\theta }_{t}}(\mathbb{E}(\bm{{d}_{t}}))$. The above simulation results verify the Proposition~\ref{prop4} and Corollary~\ref{cl2}.
\begin{figure}[!ht]
\vspace{-0.5em}
      \setlength{\abovecaptionskip}{-0.1cm}  
    \setlength{\belowcaptionskip}{-0.1cm} 
  \begin{center}  \includegraphics[width=0.82\columnwidth]{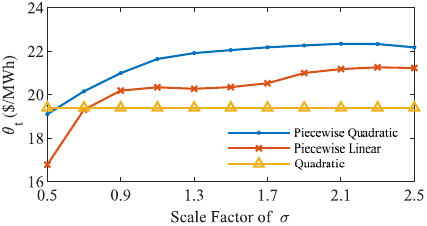}
    \caption{Expected opportunity prices under different cost functions and uncertainty levels.}\label{figureprop4}
  \end{center}
  \vspace{-0.3cm}
\end{figure}

\subsection{Regulated Market Power of Energy Storage}

In this case, we fix the confidence level at $\epsilon=0.05$ and set the scale factor of $\sigma$ to 1. Figure~\ref{figuretheorem} illustrates the opportunity prices simulated from dual variables (black solid lines) and calculated from the theorem (blue and red dashed lines) over four representative days. We observe that for charging/discharging states, the simulated opportunity prices equal the theoretical result in~\eqref{reserver1}-\eqref{reserver2}. For idle states, the simulated opportunity prices fall within the theoretical bounds as~\eqref{reserver5}. These results verify Theorem~\ref{tm}, showing that the storage opportunity price is linearly constrained by energy price and reserve price, and this relationship varies with different charging and discharging states. Furthermore, the charging price and discharging price are bounded within [0, 30.05]\$/MWh and [0, 47.63]\$/MWh for the whole-year simulation, which verifies Corollary~\ref{cl3}. The bounded opportunity prices demonstrate that the market power of energy storage can be regulated by the system operator using the proposed pricing mechanism.
\begin{figure}[!ht]
\vspace{-0.5em}
      \setlength{\abovecaptionskip}{-0.1cm}  
    \setlength{\belowcaptionskip}{-0.1cm} 
  \begin{center}  \includegraphics[width=1\columnwidth]{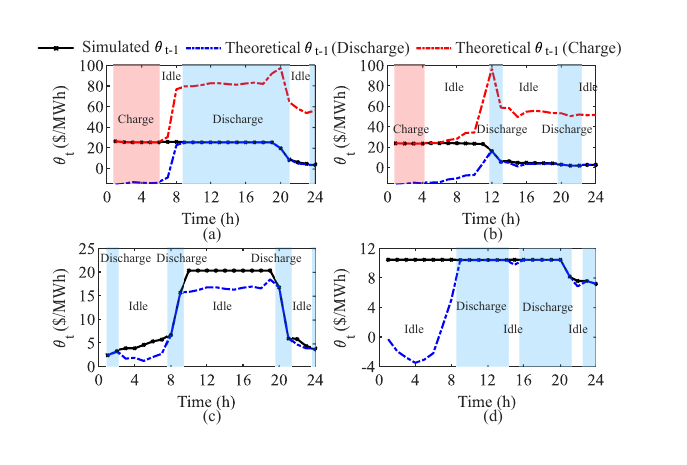}
    \caption{Simulated and theoretical opportunity prices over four simulation days.}\label{figuretheorem}
  \end{center}
  \vspace{-0.5cm}
\end{figure}

Furthermore, Figure~\ref{figuremarket} illustrates the market power of energy storage under different market participations. Compared with the baseline results (non-participation), the increase in storage capacity results in a significant reduction in storage opportunity prices, energy prices, and system costs. Compared to participating only in the energy market, prices and costs witness a faster and greater reduction when storage participates in both the energy and reserve markets. This indicates that storage exercises greater market power when additionally participating in the reserve market. Additionally, it can be observed that in the energy \& reserve provision case, storage has the greatest market power at 30\%-50\% capacity, whereas in the energy provision scenario, storage has the greatest market power at 80\%-90\% capacity. 
\begin{figure}[!ht]
\vspace{-0.5em}
      \setlength{\abovecaptionskip}{-0.1cm}  
    \setlength{\belowcaptionskip}{-0.1cm} 
  \begin{center}  \includegraphics[width=0.98\columnwidth]{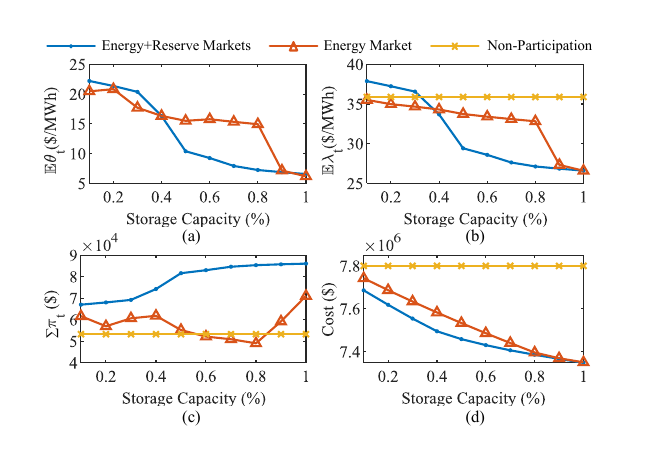}
    \caption{Prices and costs under different market participations of energy storage: (a) energy price, (b) storage opportunity price, (c) reserve cost, and (d) system cost.}\label{figuremarket}
  \end{center}
  \vspace{-0.3cm}
\end{figure}

\subsection{Benefit of the Proposed Pricing Framework}

We further compare the economic performance of proposed social welfare maximization framework with storage profit maximization framework in~\cite{qin2024economic}. The formulation of the storage profit maximization can refer to Appendix~\ref{appendix3}. In order to ensure comparability between the two pricing mechanisms, we exclude the reserve from storage and simulate the real-time energy price based on the multi-period economic dispatch with Monte Carlo simulation of net load. Additionally, we retire 20\% of conventional generators to increase price variations.

We compare the economic performance of one representative day in Figure~\ref{figurecompare}. It shows that with an increase in net load forecast error, the storage profit, conventional generation cost, system cost, and consumer electricity payment significantly increase under both pricing mechanisms. Furthermore, compared to the storage profit maximization mechanism, the storage profit, conventional generation cost, system cost, and electricity payment drop by an average of 25.4\%, 3.9\%, 3.9\%, and 17.4\%, respectively. And the decrease in storage profit only accounts for 0.49\% of the electricity payment. The above results verify that while the proposed pricing mechanism sacrifices a portion of storage profit, it significantly enhances social welfare by reducing system costs, conventional generator production, and consumer payments. 
\begin{figure}[!ht]
\vspace{-0.5em}
      \setlength{\abovecaptionskip}{-0.1cm}  
    \setlength{\belowcaptionskip}{-0.1cm} 
  \begin{center}  \includegraphics[width=0.98\columnwidth]{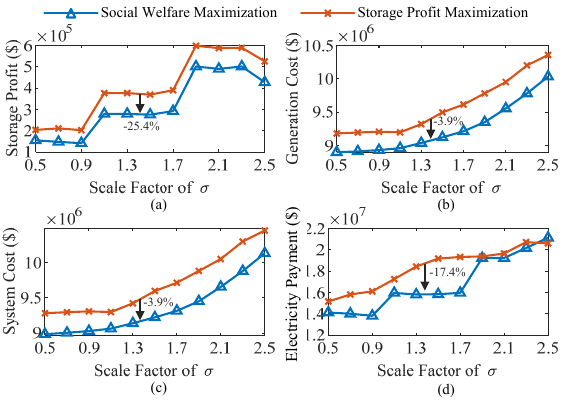}
    \caption{Economic performance under different pricing mechanisms and uncertainty levels: (a) storage profit, (b) conventional generation cost, (c) system cost, and (d) electricity payment.}\label{figurecompare}
  \end{center}
  \vspace{-0.3cm}
\end{figure}

\subsection{Sensitivity Analysis}
In this subsection, we further investigate the performance and benefits of the proposed pricing mechanism as the renewables and storage capacity scale up.

\textbf{(1) Storage Capacity.} Table~\ref{storagecapacity} summarizes the performance of the proposed pricing mechanism under different storage configurations. As storage capacity or duration increases, storage profit increases while all the other costs decrease. Compared to the profit maximization mechanism, increasing storage capacity slightly increases the storage profit reduction rate but significantly enhances the social welfare increase rate. However, increasing storage duration slightly reduces the social welfare increase rate. Specifically, storage profit is reduced by an average of 30\%, while electricity payment decreases by an average of 20\%. Moreover, the decrease in storage profit only accounts for around 8\% of the electricity payment savings. Therefore, in market practice, some of the electricity payment savings can be used to compensate storage providers who choose to take the default opportunity price. The above results verify the superior social welfare performance of the proposed pricing mechanism as storage integration scales up.
\begin{table}[!ht]
  \centering
  \caption{Comparison of Economic Performance under Different Storage Configurations and
   Price Mechanisms}
  \setlength{\tabcolsep}{0.5mm}{
    \begin{tabular}{cccccc}
    \toprule
    \makecell{Storage\\ Duration} & \makecell{Storage\\ Capacity} & \makecell{Storage \\ Profit ($10^5$\$)} & \makecell{Generation \\  Cost ($10^6$\$)} & \makecell{System \\ Cost ($10^6$\$)} & \makecell{Electricity \\ Payment ($10^7$\$)} \\
    \midrule
    \multirow{3}[1]{*}{4-hr} & 20\%  & 1.41(-40\%) & 8.95(-2.5\%) & 9.05(-2.5\%) & 1.38(-18\%) \\
          & 40\%  & 3.37(-15\%) & 8.69(-9.4\%) & 8.81(-10\%)& 1.35(-26\%) \\
          & 60\%  & 5.12(-23\%) & 8.43(-13\%) & 8.59(-14\%) & 1.33(-27\%) \\
    \multirow{3}[0]{*}{8-hr} & 20\%  & 3.37(-26\%) & 8.69(-2.6\%) & 8.81(-3.2\%) & 1.35(-12\%) \\
          & 40\%  & 6.78(-23\%) & 8.17(-12\%) & 8.37(-13\%) & 1.31(-18\%) \\
          & 60\%  & 6.84(-31\%) & 7.71(-10\%) & 8.00(-10\%) & 1.11(-28\%) \\
    \multirow{3}[1]{*}{12-hr} & 20\%  & 5.12(-23\%)  & 8.43(-2.4\%)  & 8.59(-2.6\%)  & 1.33(-9.1\%)  \\
          & 40\%  & 6.87(-30\%) & 7.70(-6.9\%) & 8.00(-6.7\%) & 1.11(-22\%) \\
          & 60\%  & 9.08(-33\%) & 7.09(-11\%) & 7.53(-10\%) & 1.05(-25\%) \\
    \bottomrule
    \end{tabular}%
    }
  \label{storagecapacity}%
\end{table}%

\textbf{(2) Renewables Penetration.} It is observed from Table~\ref{renewablecapacity} that increasing renewable penetration leads to a decrease in all profit and costs, with storage profit and electricity payment being the most affected. This is because increasing renewables penetration reduces the scarcity of energy, which in turn lowers energy and storage prices. Compared to the profit maximization mechanism, increasing renewable capacity slightly increases the storage profit reduction rate but slightly enhances the social welfare increase rate. Specifically, storage profit is reduced by an average of 33\%, while electricity payment decreases by an average of 20\%. Moreover, the decrease in storage profit only accounts for around 2\% of the electricity payment savings. The above results verify the superior social welfare performance of the proposed pricing mechanism as renewable integration scales up.
\begin{table}[!ht]
  \centering
  \caption{Comparison of Economic Performance under Different Renewable Penetration and
   Price Mechanisms}
  \setlength{\tabcolsep}{1.3mm}{
    \begin{tabular}{ccccc}
    \toprule
\makecell{Renewable\\ Capacity} & \makecell{Storage \\ Profit ($10^5$\$)} & \makecell{Generation \\  Cost ($10^6$\$)} & \makecell{System \\ Cost ($10^6$\$)} & \makecell{Electricity \\ Payment ($10^7$\$)} \\
    \midrule
    10\%  & 1.47(-30\%) & 9.18(-4.2\%) & 9.28(-4.1\%) & 1.47(-22\%) \\
    30\%  & 1.41(-39\%) & 8.95(-2.5\%) & 9.05(-2.5\%) & 1.38(-18\%) \\
    50\%  & 1.45(-28\%) & 8.75(-2.9\%) & 8.84(-3.0\%) & 1.31(-21\%) \\
    70\%  & 1.44(-29\%) & 8.58(-3.3\%) & 8.66(-3.4\%) & 1.24(-22\%) \\
    90\%  & 1.36(-41\%) & 8.43(-2.5\%) & 8.52(-2.3\%) & 1.20(-20\%) \\
    \bottomrule
    \end{tabular}%
    }\label{renewablecapacity}%
\end{table}%

\section{Conclusion}\label{Conclusion}
In this paper, we propose a novel framework for energy storage pricing that internalizes the uncertainty of net load and the risk tolerance of the market operator in the price formation process using a chance-constrained approach. We conducted theoretical analysis and simulation validation on key properties of the storage opportunity price, and compared it with profit-maximizing bidding approaches showing the proposed pricing method can effectively reduce system cost and consumer payments.

The proposed pricing model is designed to run separately to determine storage opportunity costs. Hence, it is suitable for practical implementation as its approximation and complexity will not impact the main system dispatch and market clearing process.  System operators can use the proposed approach to generate default bids for storage resources that opt-in for direct system operator management. The generated opportunity price can also serve as a benchmark to monitor storage market power, especially as a base to determine storage bid ceilings to limit the price influence for storage resources.



\bibliographystyle{IEEEtran}
\bibliography{IEEEabrv,CCOP}

\appendix

\subsection{Lagrange Function and KKT Conditions}\label{Lagrange}

The Lagrange function and KKT conditions of problem~\eqref{MCCED} are formulated in~\eqref{KKT}.
\begin{subequations}
\begin{align}
&\hspace{-0.2cm}L= \mathbb{E}\sum\nolimits_{t}\big({{{G}}\left( {{g}_{t}}+\varphi_{t}\bm{{d}_{t}} \right)}+M(p_{t}+\psi_{t}\bm{{d}_{t}})\big)-\underline\alpha_{t}b_{t}-\underline\beta_{t}p_{t}\label{lag1}\\
&\hspace{-0.2cm}+\overline\alpha_{t}(b_{t}-\psi_{t}\widehat{d}_{t}-\overline{P})+\overline\beta_{t}(p_{t}+\psi_{t}\widetilde{d}_{t}-\overline{P})-\pi_{t}(\varphi_{t}+\psi_{t}-1)\nonumber\\
&\hspace{-0.2cm}-{{\lambda }_{t}}( {{g}_{t}}+p_{t}-b_{t}-{{D}_{t}} )+{{\theta }_{t}}\left( {{e}_{t+1}}-{{e}_{t}}+p_{t}/{{\eta }}-b_{t}{{\eta }} \right)\nonumber\\
&\hspace{-0.2cm}+{{\theta }_{t-1}}( {{e}_{t}}-{{e}_{t-1}}+p_{t-1}/{{\eta }}-b_{t-1}{{\eta }} )-\underline\nu_{t}(g_{t}+\varphi_{t}\widehat{d}_{t}-\underline{G})\nonumber\\
&\hspace{-0.2cm}+\overline\nu_{t}(g_{t}+\varphi_{t}\widetilde{d}_{t}-\overline{G})-\underline\iota_{t}\big(e_{t}-(\psi_{t}\widetilde{d}_{t}+p_{t})/{\eta}\big)\nonumber\\
&\hspace{-0.2cm}+\overline\iota_{t}\big(e_{t}-\overline{E}+(b_{t}-\psi_{t}\widehat{d}_{t})\eta\big)\nonumber \\
&\hspace{-0.2cm} \dfrac{\partial L}{\partial g_{t}}=\dfrac{\partial \mathbb{E}\sum\nolimits_{t}{{G}}\left( {{g}_{t}}+\varphi_{t}\bm{{d}_{t}} \right)}{\partial {{g}_{t}}}-\lambda_{t}-\underline\nu_{t}+\overline\nu_{t}=0\label{lg2}\\
&\hspace{-0.2cm} \dfrac{\partial L}{\partial b_{t}}=-{{\theta }_{t}}{{\eta }}+\lambda_{t}-\underline{\alpha}_{t}+\overline{\alpha}_{t}+\overline\iota_{t}\eta=0\label{dc2}\\
& \hspace{-0.2cm}\dfrac{\partial L}{\partial p_{t}}=M+{{\theta }_{t}}/{\eta }-\lambda_{t}-\underline{\beta }_{t}+\overline{\beta }_{t}+\underline\iota_{t}/\eta=0\label{dd2}\\
&\hspace{-0.2cm} \dfrac{\partial L}{\partial e_{t}}=-{{\theta }_{t}}+{{\theta }_{t-1}}-\underline\iota_{t}+\overline\iota_{t}=0\label{dre2}\\
& \hspace{-0.2cm}\dfrac{\partial L}{\partial \varphi_{t}}=\dfrac{\partial \mathbb{E}\sum\nolimits_{t}{{G}}\left( {{g}_{t}}+\varphi_{t}\bm{{d}_{t}} \right)}{\partial \varphi_{t}}-\pi_{t}-\underline\nu_{t}\widehat{d}_{t}+\overline\nu_{t}\widetilde{d}_{t}=0\label{dg2}\\
& \hspace{-0.2cm}\dfrac{\partial L}{\partial \psi_{t}}=M\mu_{t}-\pi_{t}-\overline\alpha_{t}\widehat{d}_{t}+\overline\beta_{t}\widetilde{d}_{t}+\underline\iota_{t}\widetilde{d}_{t}/\eta-\overline\iota_{t}\widehat{d}_{t}\eta=0\label{de2}
\end{align}\label{KKT}\vspace{-0.5cm}
\end{subequations}

\subsection{Proof of Proposition 2} \label{appendix1}

\begin{proof} 
The conventional generator capacity is normally sufficient large, so that the constraints~\eqref{rgulbound} are not binding. Hence, we have $\underline\nu_{t}=\overline\nu_{t}=0$, then the KKT conditions~\eqref{lg2}-\eqref{dd2} can be simplified as:
\begin{subequations}
\begin{align}
& \partial L/\partial g_{t}=\partial \mathbb{E}\sum\nolimits_{t}{{G}}\left( {{g}_{t}}+\varphi_{t}\bm{{d}_{t}} \right)/\partial {{g}_{t}}-\lambda_{t}=0\label{dg}\\
& \partial L/\partial b_{t}=-{{\theta }_{t}}{{\eta }}+\lambda_{t}-\underline{\alpha}_{t}+\overline{\alpha}_{t}+\overline\iota_{t}\eta=0\label{dc}\\
& \partial L/\partial p_{t}=M+{{\theta }_{t}}/{\eta }-\lambda_{t}-\underline{\beta }_{t}+\overline{\beta }_{t}+\underline\iota_{t}/\eta=0\label{dd}
\end{align}\label{KKT-1}
\end{subequations}
To simplify the notation, we use the following substitution: $\bm{{D}_{t}}=D_{t}+\varphi_{t}\bm{{d}_{t}}$, $\partial \mathbb{E} {{G}}\left( {{g}_{t}}+\varphi_{t}\bm{{d}_{t}} \right)/{ \partial {{g}_{t}}}=H({{g}_{t}}+\varphi_{t}\bm{{d}_{t}})$, where $H$ is a monotomically increasing and linear function. Considering different binding conditions for storage power. We provide the following proof
by analyzing five cases:

\textbf{Case 1:} If constraint~\eqref{pcbound} is not binding, i.e., $b_{t}\neq0$ \& $b_{t}\neq \overline{P}$, then we have $p_{t}=0$, $\underline{\alpha}_{t}=\overline{\alpha}_{t}=\overline{\beta }_{t}=\overline\iota_{t}=0$. By combining~\eqref{dg}-\eqref{dd} and~\eqref{pb}, we can draw the following conclusion: 
\begin{subequations}\label{case1}
\begin{align}
& {{\theta }_{t}}=H({{g}_{t}}+\varphi_{t}\bm{{d}_{t}})/{\eta}=H\big(\bm{{D}_{t}}+(e_{t+1}-e_{t})/{\eta }\big)/{\eta}\\
& \dfrac{\partial{{\theta }_{t}}}{\partial{e_{t}}}=-\dfrac{1}{{{\eta }}^{2}}\dfrac{\partial H\big( \bm{{D}_{t}}+(e_{t+1}-e_{t})/{\eta }\big)}{\partial \big( \bm{{D}_{t}}+(e_{t+1}-e_{t})/{\eta }\big)}\leq 0 
\end{align}
\end{subequations}

\textbf{Case 2:} If constraint~\eqref{pdbound}~is not binding, i.e., $p_{t}\neq0$ \& $p_{t}\neq \overline{P}$, then we have $b_{t}=0$, $\overline{\alpha}_{t}=\underline{\beta }_{t}=\overline{\beta }_{t}=\underline\iota_{t}=0$. By combining~\eqref{dg}-\eqref{dd} and~\eqref{pb}, we can draw the following conclusion: 
\begin{subequations}
\begin{align}
& {{\theta }_{t}}+{{\eta }}M={{\eta }}H({{g}_{t}}+\varphi_{t}\bm{{d}_{t}})={{\eta }}H\big( {\bm{D}_{t}}+(e_{t+1}-e_{t}){\eta }\big)\\
& \dfrac{\partial{{\theta }_{t}}}{\partial{e_{t}}}=-{{\eta }}^{2}\dfrac{{\partial } H\big( {\bm{D}_{t}}+(e_{t+1}-e_{t}){\eta }\big)}{\partial \big( {\bm{D}_{t}}+(e_{t+1}-e_{t}){\eta }\big)}\leq 0 
\end{align}
\end{subequations}

\textbf{Case 3:} If $p_{t}=\overline{P}$, then we have $b_{t}=0$, $\overline{\alpha}_{t}=\underline{\beta }_{t}=\overline\iota_{t}=0$. And $\underline{\alpha}_{t},\overline{{\beta }_{t}}$ are positive and bounded. By combining~\eqref{dg}-\eqref{dc} and~\eqref{pb}, we can draw the upper bound of opportunity price:
\begin{subequations}
\begin{align}
&  \sup({{\theta }_{t}})=H\left( {{g}_{t}}+\varphi_{t}\bm{{d}_{t}} \right)/{ \eta}=H\big( {\bm{D}_{t}}+(e_{t+1}-e_{t}){\eta }\big)/{\eta}\\
& \dfrac{\partial{\sup({\theta }_{t}})}{\partial{e_{t}}}=-\dfrac{{\partial } H\big( {\bm{D}_{t}}+(e_{t+1}-e_{t}){\eta }\big)}{\partial \big( {\bm{D}_{t}}+(e_{t+1}-e_{t}){\eta }\big)}\leq 0 
\end{align}
\end{subequations}

\textbf{Case 4:} If $b_{t}=\overline{P}$, then we have $p_{t}=0$, $\underline{\alpha}_{t}=0$, $\overline{{\beta }_{t}}=\underline{\iota}_{t}=0$. And $\overline{\alpha}_{t},\underline{{\beta }_{t}}$ are positive and bounded. By combining~\eqref{dg}, \eqref{dd} and~\eqref{pb}, we can draw the lower bound of opportunity price:
\begin{subequations}
\begin{align}
& \hspace{-0.4cm} \inf({{\theta }_{t}}+{{\eta }}M)={\eta }H({{g}_{t}}+\varphi_{t}\bm{{d}_{t}})={{\eta }}H( {\bm{D}_{t}}+(e_{t+1}-e_{t})/{\eta })\\
& \hspace{-0.4cm} \dfrac{\partial{\inf({\theta }_{t}})}{\partial{e_{t}}}=-\dfrac{{\partial } H( {\bm{D}_{t}}+(e_{t+1}-e_{t})/{\eta })}{\partial ( {\bm{D}_{t}}+(e_{t+1}-e_{t})/{\eta })}\leq 0
\end{align}
\end{subequations}

\textbf{Case 5:} If $b_{t}=p_{t}=0$, then we have $\overline{\alpha}_{t}=0$, $\overline{{\beta }_{t}}=0$. And $\underline{\alpha}_{t},\underline{{\beta }_{t}}$ are positive and bounded. By combining~\eqref{dg}-\eqref{dd} and~\eqref{pb}, we can draw the following conclusion:
\begin{subequations}
\begin{align}
&\hspace{-0.3cm} \sup({{\theta }_{t}}-\overline{\iota}_{t})=H({{g}_{t}}+\varphi_{t}\bm{{d}_{t}})/{\eta}=H( {\bm{D}_{t}}+e_{t+1}-e_{t})/{\eta}\\
&\hspace{-0.3cm} \inf({{\theta }_{t}}+{{\eta }}M+\underline{\iota}_{t})={{\eta }}H({{g}_{t}}+\varphi_{t}\bm{{d}_{t}})={{\eta }}H( {\bm{D}_{t}}+e_{t+1}-e_{t})\\
&\hspace{-0.3cm} \partial{\inf({\theta }_{t}})/\partial{e_{t}}\leq0\text{, }\partial{\sup({\theta }_{t}})/\partial{e_{t}}\leq0 
\end{align}
\end{subequations}
Hence, we have finished the proof.
\end{proof}

\subsection{Proof of Theorem 1}\label{appendix2}

\begin{proof}
We proceed the following proof
by analyzing five cases with different binding conditions for constraints (1h-1i):

\textbf{Case 1:} If constraint~\eqref{pcbound} is not binding, i.e., $b_{t}\neq0$ \& $b_{t}\neq \overline{P}$, then we have $p_{t}=0$, $\underline{\alpha}_{t}=\overline{\alpha}_{t}=\overline{{\beta }_{t}}=0$. By combining~\eqref{dc2}\eqref{dre2}\eqref{de2}, we can draw the conclusion~\eqref{reserver1}.
\begin{equation}
\theta_{t-1}=\dfrac{\eta}{\widetilde{d}_{t}}\big(\theta_{t}\eta\widehat{d}_{t}+\lambda_{t}(\dfrac{\widetilde{d}_{t}}{\eta^2}-\widehat{d}_{t})+\pi_{t}-M\mu_{t} \big)\label{reserver1} 
\end{equation}

\textbf{Case 2:} If constraint~\eqref{pdbound}~is not binding, i.e., $p_{t}\neq0$ \& $p_{t}\neq \overline{P}$, then we have $b_{t}=0$, $\overline{\alpha}_{t}=\underline{{\beta }_{t}}=\overline{{\beta }_{t}}=0$. By combining~\eqref{dd2}\eqref{dre2}\eqref{de2}, we can draw the conclusion~\eqref{reserver2}. 
\begin{equation}
\theta_{t-1}=\dfrac{1}{\eta\widehat{d}_{t}}\big(\dfrac{\theta_{t}\widetilde{d}_{t}}{\eta}+\lambda_{t}(\eta^2\widehat{d}_{t}-\widetilde{d}_{t})+\pi_{t}+M(\widetilde{d}_{t}-\eta^2\widehat{d}_{t}-\mu_{t})  \big)\label{reserver2} 
\end{equation}

\textbf{Case 3:} If $p_{t}=\overline{P}$, then we have $b_{t}=0$, $\overline{\alpha}_{t}=\underline{{\beta }_{t}}=0$. And $\underline{\alpha}_{t},\overline{{\beta }_{t}}$ are positive and bounded. By combining~\eqref{dc2}\eqref{dre2}\eqref{de2}, we can draw the conclusion~\eqref{reserver3}.
\begin{equation}
\sup(\theta_{t-1})=\dfrac{\eta}{\widetilde{d}_{t}}\big(\theta_{t}\eta\widehat{d}_{t}+\lambda_{t}(\dfrac{\widetilde{d}_{t}}{\eta^2}-\widehat{d}_{t})+\pi_{t}-M\mu_{t}  \big)\label{reserver3} 
\end{equation}

\textbf{Case 4:} If $b_{t}=\overline{P}$, then we have $p_{t}=0$, $\underline{\alpha}_{t}=\overline{{\beta }_{t}}=0$. And $\overline{\alpha}_{t},\underline{{\beta }_{t}}$ are positive and bounded. By combining~\eqref{dd2}\eqref{dre2}\eqref{de2}, we can draw the conclusion~\eqref{reserver4}. 
\begin{equation}
\inf(\theta_{t-1})=\dfrac{1}{\eta\widehat{d}_{t}}\big(\dfrac{\theta_{t}\widetilde{d}_{t}}{\eta}+\lambda_{t}(\eta^2\widehat{d}_{t}-\widetilde{d}_{t})+\pi_{t}+M(\widetilde{d}_{t}-\eta^2\widehat{d}_{t}-\mu_{t}) \big)\label{reserver4} 
\end{equation}

\textbf{Case 5:} If $b_{t}=p_{t}=0$, then we have $\overline{\alpha}_{t}=\overline{{\beta }_{t}}=0$. And $\underline{\alpha}_{t},\underline{{\beta }_{t}}$ are positive and bounded. By combining~\eqref{dc2}\eqref{dd2}\eqref{dre2}\eqref{de2}, we can draw the conclusion~\eqref{reserver5}.  
\begin{subequations}
\begin{align}
    & \theta_{t-1}\leq\dfrac{\eta}{\widetilde{d}_{t}}\big(\theta_{t}\eta\widehat{d}_{t}+\lambda_{t}(\dfrac{\widetilde{d}_{t}}{\eta^2}-\widehat{d}_{t})+\pi_{t}-M\mu_{t} \big)\\
& \theta_{t-1}\geq\dfrac{1}{\eta\widehat{d}_{t}}\big(\dfrac{\theta_{t}\widetilde{d}_{t}}{\eta}+\lambda_{t}(\eta^2\widehat{d}_{t}-\widetilde{d}_{t})+\pi_{t}+M(\widetilde{d}_{t}-\eta^2\widehat{d}_{t}-\mu_{t})  \big)
\end{align}\label{reserver5}
\end{subequations}
Hence, we have finished the proof.
\end{proof}

\subsection{Formulation of Storage Profit Maximization} \label{appendix3}

\textbf{(1) Storage Profit Maximization.} Storage participants will design their bids using a profit-maximization model based on a set of energy price predictions $\lambda_{t}$. To handle the SoC dependencies in the storage
model, dynamic programming is adopted in~\cite{qin2024economic}, and the opportunity value function of storage can be calculated recursively as~\eqref{value}. Constraints~\eqref{se}-\eqref{eb} limit the power and SoC of storage. Constraint~\eqref{np} prevents energy storage from discharging at negative prices.
\begin{subequations}
\begin{align}
    & V_{t-1}({e}_{t-1})=\max _{p_t\text{, }b_t}\ \lambda_t\left(p_t-b_t\right)-M p_t+V_t({e}_{t})\label{value}\\
& {e}_{t+1}-{e}_{t}=-p_{t}/{\eta}+b_{t}{{\eta }}\label{se}\\
& 0\le b_{t}\le \overline{P}\text{, }0\le p_{t}\le \overline{P}\\
&0\le{e}_{t}\le\overline{E}\label{eb}\\
&  p_{t}=0\text{, }\lambda_t<0\label{np} 
\end{align}\label{profitmax}\vspace{-0.5cm}
\end{subequations}

\noindent where $V_{t}$ is the opportunity value of the
energy storage at time period $t$, hence the
value-to-go function in dynamic programming. And $V_t$ is dependent on SoC. 

\textbf{(2) Storage Bids Based on Opportunity Value Function.} Given the calculated storage opportunity value functions, the storage can generate charge and discharge bids based on the subderivatives of the physical cost and opportunity value functions as~\eqref{valuefunction}. 
\begin{subequations}
\begin{align}
 &O_t\left(p_t\right) =\dfrac{\partial(Mp_{t}+V_t(e_{t}))}{\partial p_{t}}=M+\frac{1}{\eta} v_t\left(e_{t-1}-p_t / \eta\right) \\ 
 &B_t\left(b_t\right) =\dfrac{\partial(V_t(e_{t}))}{\partial b_{t}}=\eta v_t\left(e_{t-1}+b_t \eta\right)
\end{align}\label{valuefunction}\vspace{-0.3cm}
\end{subequations}

\noindent where $O_t$ and $B_t$ are discharge and charge bids for energy storage. $v_t$ is the subderivative of $V_t$.

\textbf{(3) Market Clearing with Storage Bids.} System operator clear the market based on an economic dispatch problem as~\eqref{multiED-profmax}.
\begin{subequations}
\begin{align}
 &\min\ \sum\nolimits_{i} \sum\nolimits_{t}\ {{{G}}\left( {{g}_{i,t}}\right)}+O_t\left(p_t\right)-B_t\left(b_t\right)\\ 
 s.t.\ &\sum\nolimits_{i} \varphi_{t}=1\text{, }\varphi_{t}\leq 1\\
& 0\le b_{t}\le \overline{P}\text{, }0\le p_{t}\le \overline{P}\\
&0\le{e}_{t}\le\overline{E}\\
& \eqref{pb}-\eqref{SoC}\text{, }\eqref{rgulbound}-\eqref{st} 
\end{align}\label{multiED-profmax}
\end{subequations}


\end{document}